\newcommand{\seqnum}[1]{\href{http://oeis.org/#1}{\underline{#1}}}
\def\Enn{{\mathbb{N}}}
\def\codelink{\url{https://cs.uwaterloo.ca/~shallit/papers.html}}
\begin{document}

\theoremstyle{plain}
\newtheorem{theorem}{Theorem}
\newtheorem{corollary}[theorem]{Corollary}
\newtheorem{lemma}[theorem]{Lemma}
\newtheorem{proposition}[theorem]{Proposition}

\theoremstyle{definition}
\newtheorem{definition}[theorem]{Definition}
\newtheorem{example}[theorem]{Example}
\newtheorem{conjecture}[theorem]{Conjecture}
\newtheorem{openproblem}[theorem]{Open Problem}

\theoremstyle{remark}
\newtheorem{remark}[theorem]{Remark}

\title{Sums of Palindromes: \\
an Approach via Automata}

\author{Aayush Rajasekaran, Jeffrey Shallit, and Tim Smith\\
School of Computer Science \\
University of Waterloo \\
Waterloo, ON  N2L 3G1 \\
Canada \\
{\tt \{arajasekaran,shallit,timsmith\}@uwaterloo.ca}
}

\maketitle

\vskip .2 in
\begin{abstract}
Recently, Cilleruelo, Luca, \& Baxter proved, for all bases $b \geq 5$,  that every natural number is the sum of at most $3$ natural numbers whose base-$b$ representation is a palindrome.  However, the cases $b = 2, 3, 4$ were left unresolved.

We prove, using a decision procedure based on automata,
that every natural number is the sum of at most $4$ natural numbers
whose base-$2$ representation is a palindrome.   Here the constant $4$ is optimal.   We obtain similar results for bases $3$ and $4$, thus completely resolving the problem.

We consider some
other variations on this problem, and prove similar results.  We argue that heavily case-based proofs are a good signal that a decision procedure may help to automate the proof.
\end{abstract}

\section{Introduction}
In this paper we combine three different themes:  (i) additive number
theory; (ii) numbers with special kinds of representations in base $k$;
(iii) use of a decision procedure to prove theorems.  We prove, for example, that every natural number is the sum of at most $9$ numbers whose base-$2$ representation is a palindrome.

Additive number theory is the study of the additive properties of
integers.  
For example,
Lagrange proved (1770) that every natural
number is the sum of four squares \cite{Hardy&Wright:1985}.   In additive number theory, a subset $S \subseteq \Enn$ is called a {\it additive basis of order $h$} if every element of $\Enn$ can be written as a sum of at most
$h$ members of $S$, not necessarily distinct.

Waring's problem asks for the
smallest value $g(k)$ such that the $k$'th powers form a basis of order $g(k)$.
In a variation on Waring's problem, one can ask
for the smallest value $G(k)$ such that every {\it sufficiently large\/}
natural number is the sum of $G(k)$ $k$'th powers \cite{Vaughan&Wooley:2002}.   This kind of representation is called an {\it asymptotic additive basis} of order $G(k)$.

Quoting Nathanson \cite[p.~7]{Nathanson:1996},
\begin{quote}
{\it ``The central problem in additive number theory is to determine if a given set of integers is a basis of finite order.''}
\end{quote}
In this paper we show how to solve this central problem for certain sets, {\it using almost no number theory at all}.

Our second theme concerns numbers with special representations in
base $k$.  For example, numbers of the form $11\cdots 1$ in base
$k$ are sometimes called {\it repunits} \cite{Yates:1978}, and special
effort has been devoted to factoring such numbers, with the Mersenne numbers $2^n - 1$ being the most famous examples.  The {\it Nagell-Ljunggren problem} asks for a characterization of those repunits that are integer powers (see, e.g., \cite{Shorey:1986}).

Another interesting class, and the one that principally concerns us in this article,
consists of those numbers whose
base-$k$ representation forms a {\it palindrome}: a string that
reads the same forwards and backwards, like the English word
{\tt radar}.  Palindromic numbers have
been studied for some time in number theory; see, for example,
\cite{Simmons:1972,
Trigg:1974,
Kresova&Salat:1984,
Aschbacher:1990,
Keith:1990,
Barnett:1991,
Korec:1991,
Harminc&Sotak:1998,
Luca:2003,
Banks&Hart&Sakata:2004,
Hernandez&Luca:2006,
Banks&Shparlinski:2005,
Banks&Shparlinski:2006,
Kidder&Kwong:2007,
Luca&Togbe:2008,
Col:2009,
Cilleruelo&Luca&Shparlinski:2009,
Goins:2009,
Luca&Young:2011,
Basic:2012,
Cilleruelo&Tesoro&Luca:2013,
Berczes&Ziegler:2014,
Basic:2015,
Pollack:2015,
Aloui&Mauduit&Mkaouar:2017}.

Recently Banks 
initiated the study of the additive properties of palindromes,
proving that every natural number is the sum of at most 
49 numbers whose decimal representation is a palindrome \cite{Banks:2016}.  (Also see \cite{Sigg:2015}.)
Banks' result was improved
by Cilleruelo, Luca, \& Baxter \cite{Cilleruelo&Luca:2016,Cilleruelo&Luca&Baxter:2017}, who
proved that for all bases $b \geq 5$, every natural number is
the sum of at most $3$ numbers whose base-$b$ representation is
a palindrome.  The proofs of Banks and 
Cilleruelo, Luca, \& Baxter are both rather lengthy and case-based.
Up to now, there have been no proved results for bases $b = 2,3, 4$.

The long case-based solutions to the problem of representation by sums
of palindromes suggests that perhaps a more automated approach 
might be useful.    For example, in a series of recent papers,
the second author and his co-authors have proved a number of
old and new results in combinatorics on words using a decision procedure based on first-order logic
\cite{Allouche&Rampersad&Shallit:2009,
Charlier&Rampersad&Shallit:2012,
Goc&Henshall&Shallit:2012,
Shallit:2013,
Goc&Saari&Shallit:2013,
Goc&Mousavi&Shallit:2013,
Goc&Schaeffer&Shallit:2013,
Du&Mousavi&Schaeffer&Shallit:2016,
Mousavi&Schaeffer&Shallit:2016,
Du&Mousavi&Rowland&Schaeffer&Shallit:2017}.  The classic result of Thue \cite{Thue:1912,Berstel:1995} that the Thue-Morse infinite word  ${\bf t} = 0110100110010110 \cdots$ avoids overlaps (that is, blocks of the form $axaxa$ where $a$ is a single letter and $x$ is a possibly empty block) is an example of a case-based proof that can be entirely replaced \cite{Allouche&Rampersad&Shallit:2009} with a decision procedure based on the first-order logical theory ${\rm FO}(\Enn, +,  V_2)$.

Inspired by these and other successes in automated deduction and
theorem-proving (e.g., \cite{McCune:1996}),
we turn to formal languages and automata
theory as a suitable framework for expressing the palindrome representation
problem.  Since we want to make assertions about the representations of
{\it all\/} natural numbers, this requires finding (a) a machine model or
logical theory in which universality is decidable and (b) a variant of
the additive problem of palindromes suitable for this machine model or
logical theory.   The first model we use is the {\it nested-word automaton}, a variant of the more familiar pushdown automaton.  This is used to handle the case for base $b = 2$.  The second model we use is the ordinary finite automaton.  We use to resolve the cases $b = 3,4$.

Our paper is organized as follows:
In Section~\ref{sopp} we introduce some notation and terminology, and
state more precisely the problem we want to solve.  In Section~\ref{nwa} we recall the pushdown automaton model and give an example, and we motivate our use of nested-word automata.
In Section~\ref{solve} we restate our problem in the framework of nested-word automata, and the proof of a bound of $4$ palindromes is given in Section~\ref{nwap-proof}.  The novelty of our approach involves replacing the long case-based reasoning of previous proofs with an automaton-based approach using a decision procedure.   In Section~\ref{variations} we consider some
variations on the original problem.   In Section~\ref{objections} we discuss possible objections to our approach.  In Section~\ref{future} we describe future work.   Finally, in Section~\ref{moral}, we conclude our paper by stating a thesis underlying our approach.

\section{The sum-of-palindromes problem}
\label{sopp}

We first introduce some notation and terminology.  

The natural numbers are $\Enn = \{ 0,1,2,\ldots\}$.
If $n$ is a natural number, then by $(n)_k$ we mean the string (or word)
representing $n$ in base $k$, with no leading zeroes, starting with
the most significant digit.  Thus, for
example, $(43)_2 = 101011$.  The alphabet $\Sigma_k$ is defined
to be $\{ 0,1,\ldots, k-1 \}$; by $\Sigma_k^*$ we mean the set of all
finite strings over $\Sigma_k$.  If $x \in \Sigma_\ell^*$ for some $\ell$, then by
$[x]_k$ we mean the integer represented by the string $x$, considered
as if it were a number in base $k$, with the most significant digit at the left.  That is,
if $x = a_1 a_2 \cdots a_n$, then
$[x]_k = \sum_{1 \leq i \leq n} a_i k^{n-i}$.
For example, $[135]_2 = 15$.

If $x$ is a string, then $x^i$ denotes the string
$\overbrace{xx\cdots x}^i$, and $x^R$ denotes the reverse of $x$.  Thus,
for example, $({\tt ma})^2 = {\tt mama}$, and $({\tt drawer})^R = {\tt reward}$.   If $x = x^R$, then
$x$ is said to be a {\it palindrome}.   

We are interested in integers whose base-$k$ representations are
palindromes.  In this article, we routinely abuse terminology by calling
such an integer a {\it base-$k$ palindrome}.  In the case where $k = 2$,
we also call such an integer a {\it binary palindrome}.  The first
few binary palindromes are
$$ 0,1,3,5,7,9,15,17,21,27,31,33,45,51,63, \ldots ;  $$
these form sequence \seqnum{A006995} in the
{\it On-Line Encyclopedia of Integer Sequences} (OEIS).

If $k^{n-1} \leq r < k^n$ for $n \geq 1$, we say that
$r$ is an {\it $n$-bit integer\/} in base $k$.  If $k$ is unspecified,
we assume that $k = 2$.  Note that the first bit of an $n$-bit integer
is always nonzero.  The {\it length\/} of an integer $r$ satisfying
$k^{n-1} \leq r < k^n$ is defined to be $n$; alternatively, the
length of $r$ is $1 + \lfloor \log_k r \rfloor$.

Our goal is to find a constant $c$ such that every natural number is the sum of at most $c$ binary palindromes.  To the best of our knowledge, no such bound has been proved up to now.
In Sections~\ref{solve} and \ref{nwap-proof} we describe how we used a decision procedure for
nested-word automata to prove the following result:

\begin{theorem}
For all $n \geq 8$, every $n$-bit odd integer
is either a binary palindrome itself, or the sum of three binary palindromes
\begin{itemize}
\item[(a)]  of lengths $n$, $n-2$, and $n-3$; or
\item[(b)]   of lengths $n-1$, $n-2$, and $n-3$.
\end{itemize}
\label{nwap}
\end{theorem}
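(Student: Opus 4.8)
The plan is to recast the statement as a single emptiness question for an automaton, and discharge it with a decision procedure rather than by hand. Fix one of the two length patterns, say (a), and write the target as $d_1 d_2 \cdots d_n$ in base $2$ with $d_1 = d_n = 1$ (it is $n$-bit and odd). The length constraints pin down exactly how the three palindromes align against the target: right-aligning least-significant bits, a palindrome of length $n$ occupies columns $1$ through $n$, one of length $n-2$ occupies columns $3$ through $n$, and one of length $n-3$ occupies columns $4$ through $n$; pattern (b) merely shifts the longest palindrome to columns $2$ through $n$. Because the alignment is fixed in advance, whether a given number admits such a decomposition becomes a purely column-by-column condition, and the theorem turns into the claim that a certain language---the encodings of odd $n$-bit integers with $n \geq 8$ that are neither palindromes nor so decomposable---is empty.

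First I would build a nested-word automaton $M$ that reads an encoding of the target and accepts precisely when the desired decomposition (or the trivial palindrome case) holds. The machine nondeterministically guesses, column by column, the bits of the three candidate palindromes, and enforces two families of constraints. The \emph{addition} constraint requires that in each column the sum of the guessed bits plus the incoming carry have parity $d_j$ and emit the correct outgoing carry. The \emph{palindrome} constraint requires each candidate's bit in column $i$ to equal its bit in the mirror column; since the three palindromes have nearly coincident centers, the mirror of column $i$ is $n+1-i$, $n+3-i$, and $n+4-i$ respectively, so a single stack together with a bounded buffer of recently seen bits suffices to match each left-hand column against the corresponding cluster of right-hand columns.

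The hard part is fitting all of this into the visibly-pushdown (nested-word) discipline whose restricted stack usage is exactly what keeps universality decidable. Checking palindromicity is non-regular, so the automaton must read the input in a folded, outside-in fashion, examining column $i$ and its mirror together and using matched calls and returns to verify the symmetric digit equalities. But folding scrambles the natural left-to-right order of binary addition: I must therefore run two carry fronts inward, one seeded by the least-significant boundary condition (carry-in to column $n$ is $0$) and one seeded by the no-overflow boundary condition (carry-out of column $1$ is $0$, forced because the sum equals an $n$-bit number), tracking both carry bits in the finite control and reconciling them when the two fronts meet at the center. Getting the push/pop schedule, the carry reconciliation, and the bounded center-region bookkeeping to coexist---and choosing an encoding under which all three length patterns and both families of constraints are simultaneously checkable---is the main design obstacle.

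With $M$ in hand, the theorem follows from decidability: nested-word automata are closed under complement and intersection with regular sets and have decidable emptiness, so I would form the automaton recognizing the counterexample language above and verify that it is empty, which covers all lengths $n \geq 8$ at once. The hypothesis $n \geq 8$ is what makes the length pattern well-formed---so that the palindromes of lengths $n-2$ and $n-3$ are genuine and their symmetry regions do not collide near the center---and the finitely many shorter cases simply fall outside the statement. The resulting proof is thus the explicit construction of $M$ followed by a machine-verified emptiness check, exactly in the spirit of the paper's thesis that long, heavily case-based arguments are better replaced by a decision procedure.
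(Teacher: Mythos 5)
Your overall strategy is the paper's own: fix the digit alignments forced by the stated lengths (your mirror arithmetic $n+1-i$, $n+3-i$, $n+4-i$ is exactly right), build a nested-word automaton that nondeterministically guesses the summands' bits while enforcing the addition constraints with carries in the finite control and the palindrome constraints with the stack, and then discharge the universally quantified statement by a decidable emptiness/inclusion check, covering all $n \geq 8$ at once. The paper does precisely this, building three machines (\texttt{palChecker}, \texttt{palChecker2}, \texttt{palChecker3}) for the palindrome case and cases (a), (b), determinizing and unioning them, and asking ULTIMATE to verify that the language of all syntactically valid inputs is included in the union's language.

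Where you diverge is in the machine design, and as written your sketch has a real mismatch with the nested-word discipline. You propose reading the input outside-in, with two carry fronts seeded at the two ends (carry-in $0$ at the least significant column, no overflow at the most significant one) and reconciled at the center. But an NWA reads its input linearly, and the stack is the only mechanism relating a position to its mirror: pops pair positions \emph{center-outward} (last pushed, first popped). So on a linear encoding you cannot process column $i$ together with column $n+1-i$ starting from the outside; deferring all checks to the pop phase forces the two fronts to be seeded at the center and verified at the two ends, the opposite of what you describe. The paper's construction for this theorem sidesteps the issue entirely: the input is fed least-significant digit first, the addition is checked in reading order with a \emph{single} carry front, and the stack merely transports the guessed bits to their mirror positions, with a few extra state coordinates ($k$, $l_1$, $l_2$, $m_1$, $m_2$, $m_3$) buffering recent guesses to absorb the length offsets $0$, $2$, $3$. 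Your folded, two-carry-front scheme is in fact the technique the paper uses for bases $3$ and $4$: there the \emph{input encoding itself} is folded into digit pairs from opposite ends, whereupon no stack is needed at all, an NFA suffices (making decidability of inclusion immediate), and one tracks an incoming carry on the low end and an expected outgoing carry on the high end, reconciled by the acceptance condition. Either repair---single carry front with stack transport, or folded input with an NFA---turns your plan into a correct proof; what remains in both cases is the computational work itself, which is the actual content of the paper's proof.
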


As a corollary, we get our main result:

\begin{corollary}
Every natural number $N$ is the sum of at most $4$ binary palindromes.
\label{main}
\end{corollary}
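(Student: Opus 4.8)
The plan is to derive the corollary from Theorem~\ref{nwap} by a pair of elementary reductions, so that essentially all the work is shouldered by the decision-procedure result already in hand. The key observations are that $1 = (1)_2$ is a binary palindrome (indeed $0$ and $1$ both are), so it is available as a ``free'' summand, and that Theorem~\ref{nwap} controls \emph{odd} integers of length at least $8$. I would therefore split on the parity of $N$ and treat small values separately.

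First, suppose $N$ is odd. If $N$ has $n \geq 8$ bits, Theorem~\ref{nwap} says $N$ is either a binary palindrome or a sum of three binary palindromes, so $N$ is a sum of at most $3$ palindromes; the length data in parts~(a) and~(b) are not needed here. Next, suppose $N$ is even with $N \geq 2$, and write $N = (N-1) + 1$. Then $N-1$ is odd, and provided $N-1$ has at least $8$ bits, Theorem~\ref{nwap} expresses it as a sum of at most three palindromes; adjoining the palindrome $1$ exhibits $N$ as a sum of at most $4$ palindromes. These two reductions cover every $N$ for which the relevant odd number has length at least $8$, i.e.\ all odd $N \geq 2^7$ and all even $N \geq 2^7 + 2$. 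The remaining values form the finite set $\{0,1,\ldots,128\}$, and I would dispatch these by a direct computation, checking for each such $N$ an explicit decomposition into at most four binary palindromes (a short search, since the palindromes below $128$ are already listed).

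The main point is that there is no real obstacle left: the entire difficulty of the problem has been absorbed into Theorem~\ref{nwap}, and the corollary is merely a matter of packaging. The only things demanding any care are making sure the auxiliary summand is genuinely a palindrome — which holds because $1$ reads the same forwards and backwards — and correctly pinning down the finite set of base cases so that the parity reduction always lands on an odd number long enough for the theorem to apply. Both are routine, the latter being a trivial finite verification by computer.
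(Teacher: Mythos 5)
Your proof is correct and takes essentially the same approach as the paper's: a finite check for small $N$, Theorem~\ref{nwap} applied directly to odd $N$, and the decomposition $N = (N-1) + 1$ with the extra palindrome $1$ for even $N$. The only (cosmetic) difference is at the boundary: the paper insists that $N-1$ have the same length $n$ as $N$, forcing it to treat even $N = 2^{n-1}$ separately via the explicit decomposition $(2^{n-1}-1) + 1$, whereas you avoid that special case by requiring only that $N-1$ have at least $8$ bits and extending the finite verification to include $N = 128$.
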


\begin{proof}
It is a routine computation to verify the result for $N < 128$.  

Now suppose $N \geq 128$.  Let $N$ be an $n$-bit integer; then 
$n \geq 8$.  If $N$ is odd, then
Theorem~\ref{nwap} states that $N$ is the sum of at most $3$ binary palindromes.  Otherwise, $N$ is even.

If $N = 2^{n-1}$, then it is the sum of $2^{n-1} -1$ and
$1$, both of which are palindromes.

Otherwise, $N-1$ is also an $n$-bit odd integer.
Use Theorem~\ref{nwap} to find a representation for $N-1$
as the sum of at most $3$ binary palindromes, and
then add the palindrome $1$ to get a representation for $N$.
\end{proof}

\begin{remark}
We note that the bound $4$ is optimal since, for example, the number $176$ is not the sum of three or fewer binary palindromes.

Sequence \seqnum{A261678} in the OEIS lists those even numbers that are not the sum of two binary palindromes.  Sequence \seqnum{A261680} gives the number of distinct representations as the sum of four binary palindromes.
\end{remark}

\section{Finding an appropriate computational model}
\label{nwa}

To find a suitable model for proving Theorem~\ref{nwap}, we turn to formal languages and automata.  We seek some class of automata with the following property: for each $k$, there is an automaton which, given a natural number $n$ as input, accepts the input iff $n$ can be expressed as the sum of $k$ palindromes.  Furthermore, we would like the problem of universality (``Does the automaton accept every possible input?'') to be decidable in our chosen model.  By constructing the appropriate automaton and checking whether it is universal, we could then determine whether every number $n$ can be expressed as the sum of $k$ palindromes.

Palindromes suggest considering
the model of pushdown automaton (PDA), since it is
well-known that this class of machines, equipped with a stack, can accept the palindrome language ${\tt PAL} = \{ x \in \Sigma^* \ : \ x = x^R \}$ over any fixed alphabet $\Sigma$. 
A tentative approach is as follows: create a PDA $M$ that, on input $n$ expressed in base $2$, uses nondeterminism to ``guess'' the $k$ summands and verify that (a) every summand is a palindrome, and (2) they sum to the input $n$.  We would then check to see if $M$ accepts all of its inputs.  However, two problems immediately arise.  

The first problem is that universality is recursively unsolvable for nondeterministic PDAs \cite[Thm.~8.11, p.~203]{Hopcroft&Ullman:1979}, so even if the automaton $M$ existed, there would be no algorithm guaranteed to check universality.   

The second problem involves checking that the guessed summands are palindromes.  One can imagine guessing the summands in parallel, or in series.  If we try to check them in parallel, this seems to correspond to the recognition of a language which is not a CFL (i.e., a context-free language, the class of languages recognized by nondeterministic PDAs).  Specifically, we encounter the following obstacle:

\begin{theorem}
The set of strings $L$ over the alphabet $\Sigma\times (\Sigma \ \cup \ \#)$,
where the first ``track''
is a palindrome and the second ``track'' is a another, possibly shorter, palindrome, padded on the right with \# signs, is not a CFL.
\end{theorem}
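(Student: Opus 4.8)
The plan is to assume, for contradiction, that $L$ is context-free, and then to derive a contradiction using the standard closure properties of the context-free languages — closure under intersection with a regular language and under (coding) homomorphisms — together with the pumping lemma. The strategy is to strip away almost all of the complexity of $L$ by intersecting with a carefully chosen regular language $R$, so that what survives is a transparent ``three-block'' language that is manifestly not context-free. Throughout I take $\Sigma$ to contain two letters which I call $0$ and $1$.

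First I would design $R$. The governing idea is that $L$ imposes two palindrome conditions — one on each track — and, because the second track is padded with $\#$'s, these two palindromes have different centers; I want to arrange the letters so that the two conditions pin down the two outer blocks of a three-block word while leaving the middle block free. Concretely, over the pair alphabet I take
\[
R \;=\; (0,0)^*\,(1,1)\,(0,0)^*\,(0,\#)^*\,(1,\#)\,(0,\#)^* ,
\]
so that a typical word of $R$ reads $(0,0)^a (1,1)(0,0)^g (0,\#)^e (1,\#)(0,\#)^f$. Its first track is $0^a 1 0^{g+e} 1 0^f$ and its second track is $0^a 1 0^g$ followed by a block of $\#$'s. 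The point of slipping the symbol $(1,\#)$ into the padded region is that it places a $1$ on the first track but not on the second, so the two tracks carry genuinely different palindromes rather than copies of one another.

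Next I would read off what membership in $L$ forces. The first track $0^a 1 0^{g+e} 1 0^f$ is a palindrome exactly when $a=f$, while the non-$\#$ prefix $0^a 1 0^g$ of the second track is a palindrome exactly when $a=g$. Hence $L\cap R$ consists precisely of the words $(0,0)^a(1,1)(0,0)^a(0,\#)^e(1,\#)(0,\#)^a$ with $a,e\ge 0$. Applying the coding $h$ that projects each pair onto its first component then gives
\[
h(L\cap R)=\{\,0^a 1\,0^{a+e}\,1\,0^a \;:\; a,e\ge 0\,\}=\{\,0^a 1 0^b 1 0^a \;:\; b\ge a\ge 0\,\}=:M .
\]
Since the context-free languages are closed under intersection with regular sets and under codings, it suffices to prove that $M$ is not context-free.

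Finally I would verify that $M$ is not context-free by the pumping lemma, which I expect to be routine: taking $s=0^p 1 0^p 1 0^p\in M$, any decomposition $s=uvwxy$ with $|vwx|\le p$ has $vwx$ too short to reach all three $0$-blocks (the two $1$'s are $p+1$ apart), so $vx$ either changes the number of $1$'s, or alters one outer block without matching the other, or shrinks the middle block below the common outer length when pumped down; in every case the pumped word leaves $M$. The genuine obstacle here is conceptual rather than computational, and it lies entirely in the design of $R$: one must realize that two palindrome conditions with mismatched centers can be made to enforce ``first block $=$ last block'' while leaving the middle block unconstrained, and in particular that the track-$1$ palindrome condition still governs the letters sitting underneath the $\#$-padding — which is exactly what the auxiliary symbol $(1,\#)$ is there to exploit. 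Once $R$ is chosen correctly, the remaining steps are bookkeeping.
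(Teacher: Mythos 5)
Your proof is correct, and it shares the paper's central insight --- intersect $L$ with a regular language whose two palindrome conditions have mismatched centers, so that the track-1 condition reaches into the \#-padded region and forces an equality between far-apart blocks --- but it finishes differently. The paper intersects with $[1,1]^+\,[1,0]\,[1,1]^+\,[0,1]\,[1,\#]^+$ and then applies Ogden's lemma \emph{directly to the two-track language}, marking the $[1,\#]$ symbols and arguing case by case about where $v$ and $x$ can fall. You instead add one more closure step: after intersecting with your $R$, you project onto the first track by a homomorphism, collapsing everything to the one-track language $M=\{0^a10^b10^a : b\ge a\ge 0\}$, which the \emph{ordinary} pumping lemma kills by a routine three-block analysis (your case sketch is complete: a pumped $1$ changes the count of $1$'s; pumped $0$'s touching an outer block unbalance $a$ against the untouched outer block; pumped $0$'s confined to the middle block violate $b\ge a$ when pumped down). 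What your route buys is elementariness and transparency --- no Ogden's lemma, and the contradiction lives in a clean unary-block language rather than in the pair alphabet; what the paper's route buys is brevity, since it skips the projection and verification of $L\cap R$ and lets the marking in Ogden's lemma do the case-narrowing for it. Both are sound; yours would serve equally well as the proof in the paper.
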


\begin{proof}
Assume that it is.  Consider $L$ intersected with the regular language 
$$[1,1]^+ [1,0] [1,1]^+ [0,1] [1,\#]^+,$$
and call the result $L'$.  We use Ogden's lemma \cite{Ogden:1968} to show $L'$ is not a CFL.

Choose $z$ to be a string where the first track is $(1^{2n} 0 1^{2n})$ and the second track is $(1^n 0 1^n \#^{2n})$.  Mark the compound symbols $[1,\#]$.
Then every factorization $z = uvwxy$ must have at least one $[1,\#]$ in $v$ or $x$.
If it's in $v$, then the only choice for $x$ is also $[1,\#]$, so pumping gives a non-palindrome on the first track.
If it's in $x$ then $v$ can be $[1,1]^i$ or contain $[1,0]$ or $[0,1]$ .  If the latter, pumping twice gives a string not in $L'$ because there is more than one $0$ on one of the two tracks.  If the former, pumping twice gives a string with the second track not a palindrome.   This contradiction shows that $L'$, and hence $L$, is not a context-free language.
\end{proof}
So, using a pushdown automaton, we cannot check arbitrary palindromes of wildly unequal lengths in parallel.

\medskip

If the summands were presented serially, we could check whether each summand individually is a palindrome, using the stack, but doing so destroys our copy of the summand, and so we cannot add them all up and compare them to the input.  In fact, we cannot add serial summands in any case, because we have 
\begin{theorem}
The language 
$$ L = \{ (m)_2 \# (n)_2 \# (m+n)_2 \ : \  m, n \geq 0 \} $$
is not a CFL. 
\end{theorem}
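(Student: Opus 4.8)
The plan is to reduce to a cleaner language by intersecting $L$ with a well-chosen regular language, and then to apply the pumping lemma for context-free languages. Since the context-free languages are closed under intersection with regular sets, it suffices to exhibit one regular language $R$ for which $L \cap R$ fails to be a CFL. I would take $R = 1 0^* \, \# \, 1 0^* \, \# \, 1 0^*$, which forces each of the three summands to be a power of $2$. Writing $m = 2^a$, $n = 2^b$, and $m+n = 2^c$, the equation $2^a + 2^b = 2^c$ has solutions only when $a = b$ and $c = a+1$, since a sum of two powers of two is again a power of two precisely when the two powers coincide. Hence
$$ L \cap R = \{ 1 0^a \, \# \, 1 0^a \, \# \, 1 0^{a+1} \ : \ a \geq 0 \}, $$
a language with the same ``three balanced blocks'' flavour as $\{ a^n b^n c^n \}$.

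Next I would apply the pumping lemma to $L \cap R$. Let $p$ be the pumping length and choose $z = 1 0^p \, \# \, 1 0^p \, \# \, 1 0^{p+1}$, so that $z = uvwxy$ with $|vwx| \le p$ and $|vx| \ge 1$. The first observation is that neither $v$ nor $x$ can contain a $\#$ or a leading $1$: pumping such a factor up (or, in the case of a $1$, down) would create a string with the wrong number of separators or leading ones, hence a string outside $R$. So I may assume $vx \in 0^*$, that is, the pumped symbols are drawn only from the three zero-blocks.

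The crucial quantitative point is that each zero-block of $z$ has length at least $p$, whereas $|vwx| \le p$; therefore $vwx$ meets at most two of the three blocks. Pumping to any $i \ne 1$ then changes the lengths of at most two blocks while leaving the third fixed. A short case analysis, according to whether $vx$ lies inside a single block or straddles one separator, shows that every outcome violates one of the two defining relations of $L \cap R$: either the equality of the first two blocks, or the requirement that the third block exceed the first by exactly one. This contradiction shows that $L \cap R$ is not context-free, and hence neither is $L$.

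I expect the main work to lie in setting up the intersection so that the resulting language has a genuine three-way dependence among its blocks; once that is arranged, the pumping argument is routine. The only point demanding care is the verification that a pumped factor straddling a separator, and thus adding zeros to two adjacent blocks at once, still cannot respect both balance conditions simultaneously; but this follows immediately by comparing block lengths after pumping.
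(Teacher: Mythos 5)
Your proof is correct. The number-theoretic core is right: $2^a+2^b=2^c$ forces $a=b$ and $c=a+1$, so $L\cap R=\{10^a\,\#\,10^a\,\#\,10^{a+1} : a\ge 0\}$, and your pumping argument on this language goes through. Since the zero-blocks each have length at least $p$ while $|vwx|\le p$, the window $vwx$ can touch zeros of at most two \emph{adjacent} blocks (reaching from the first to the third would require spanning a stretch of length greater than $p$), and in every case some block is left unpumped, breaking either the equality of the first two blocks or the ``plus one'' relation with the third.

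Your route differs from the paper's in a way worth noting. The paper intersects $L$ with $1^+0 1^+ \# 1^+ \# 1^+ 0$, obtaining $\{1^a 0 1^b \# 1^c \# 1^d 0 : b=c \text{ and } a+b=d\}$ via the identity $2^{a+b+1}-2^b-1+2^c-1 = 2^{d+1}-2$, and then invokes Ogden's lemma (marking positions) to kill context-freeness. You instead intersect with $10^*\#10^*\#10^*$, which restricts all three numbers to powers of two; the arithmetic fact you need is more transparent, and the resulting language is so rigid (a $\{a^nb^nc^n\}$-style three-way dependence) that the ordinary Bar-Hillel pumping lemma suffices --- no marked positions needed. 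What the paper's choice buys is a language whose non-context-freeness reflects the carry-free additive structure directly; what yours buys is a shorter verification of the intersection and a more elementary pumping tool. Both are sound; yours is arguably the cleaner reduction for this particular theorem.
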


\begin{proof}
Assume $L$ is a CFL and intersect with the regular language
$1^+ 0 1^+ \# 1^+ \# 1^+ 0$, obtaining $L'$.  We claim that
$$L' = \{ 1^a 0 1^b \# 1^c \# 1^d 0 \ : \ b=c \text{ and } a+b = d \}.$$
This amounts to the claim, easily verified, that 
the only solutions to the equation $2^{a+b+1} - 2^b - 1 + 2^c - 1 = 2^{d+1}- 2$ are $b=c$ and $a+b  = d$.  Then, starting with the
string $z = 1^n 0 1^n \# 1^n \# 1^{2n} 0 $,  an easy argument with Ogden's lemma proves that $L'$ is not a CFL, and hence neither is $L$.
\end{proof}
So, using a pushdown automaton, we can't handle summands given in series, either.

\medskip

These issues lead us to restrict our attention to representations as sums of palindromes of the same (or similar) lengths.   More precisely, we consider the following variant of the additive problem of palindromes: for a length $l$ and number of summands $k$, given a natural number $n$ as input, is $n$ the sum of $k$ palindromes all of length exactly $l$?  Since the palindromes are all of the same length, a stack would allow us to guess and verify them in parallel.  To tackle this problem, we need a model which is both (1) powerful enough to handle our new variant, and (2) restricted enough that universality is decidable.  We find such a model in the class of {\it nested-word automata}, described in the next section.

\section{Restating the problem in the language of nested-word automata}
\label{solve}

Nested-word automata (NWAs) were popularized by Alur and Madhusudan  \cite{Alur&Madhusudan:2004,Alur&Madhusudan:2009}, although essentially the same model was discussed previously  by  Mehlhorn \cite{Mehlhorn:1980}, von Braunm{\"u}hl and Verbeek \cite{Braunmuhl&Verbeek:1983}, and Dymond \cite{Dymond:1988}.
They are a restricted variant of pushdown automata.  Readers familiar with visibly pushdown automata (VPA) should note that NWAs are an equally powerful machine model as VPAs \cite{Alur&Madhusudan:2004, Alur&Madhusudan:2009}.  We only briefly describe their functionality here.   For other theoretical aspects of nested-word and visibly-pushdown automata, see  \cite{LaTorre&Napoli&Parente:2006,Piao&Salomaa:2009,Han&Salomaa:2009,Salomaa:2011,Okhotin&Salomaa:2017}.

The input alphabet of an NWA is partitioned into three sets: a \emph{call alphabet}, an \emph{internal alphabet}, and a \emph{return alphabet}.  An NWA has a stack, but has more restricted access to it than PDAs do. If an input symbol is from the internal alphabet, the NWA cannot access the stack in any way.  If the input symbol read is from the call alphabet, the NWA pushes its current state onto the stack, and then performs a transition, based only on the current state and input symbol read.  If the input symbol read is from the return alphabet, the NWA pops the state at the top of the stack, and then performs a transition based on three pieces of information: the current state, the popped state, and the input state read. An NWA accepts if the state it terminates in is an accepting state.

As an example, Figure~\ref{fig1} illustrates a nested-word automaton accepting the language $\{ 0^n 1 2^n \ : \ n \geq 1 \}$.
\begin{figure}[H]
\begin{center}
\includegraphics[width=14cm]{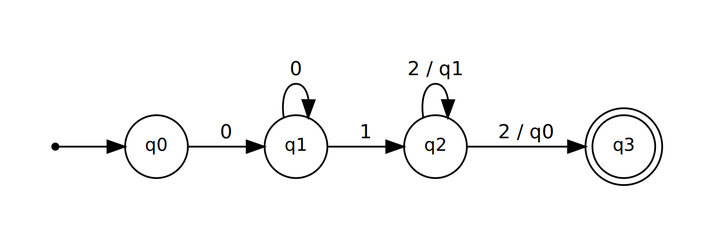}
\end{center}
\caption{A nested-word automaton for the language $\{ 0^n 1 2^n \ : \ n \geq 1 \}$}
\label{fig1}
\end{figure}
\noindent Here the call alphabet is $\{0\}$, the internal alphabet is $\{1\}$, and the return alphabet is $\{2\}$.

Nondeterministic NWAs are a good machine model for our problem, because nondeterminism allows ``guessing'' the palindromes that
might sum to the input, and the stack allows us to ``verify" that they are indeed palindromes. Deterministic NWAs are as expressive as nondeterministic NWAs, and the class of languages they accept is closed under the operations of union, complement and intersection. Finally, testing emptiness, universality, and language inclusion are all decidable problems for NWAs \cite{Alur&Madhusudan:2004,Alur&Madhusudan:2009}; there is an algorithm for all of them.

For a nondeterministic NWA of $n$ states, the corresponding determinized machine has at most $2^{\Theta(n^2)}$ states, and there are examples for which this bound is attained. This very rapid explosion in state complexity potentially could make deciding problems such as language inclusion infeasible in practice. Fortunately, we did not run into determinized machines with more than $40000$ states in proving our results. Most of the algorithms invoked to prove our results run in under a minute. 

We now discuss the general construction of the NWAs that check whether inputs are sums of binary palindromes. We partition the input alphabet into the call alphabet $\{a,b\}$, the internal alphabet $\{c,d\}$, and the return alphabet $\{e,f\}$. The symbols $a$, $c$, and $e$ correspond to 0, while $b$, $d$, and $f$ correspond to 1. The input string is fed to the machine starting with the {\it least significant digit}. We provide the NWA with input strings whose first half is entirely made of call symbols, and second half is entirely made of return symbols. Internal symbols are used to create a divider between the halves (for the case of odd-length inputs).

The idea behind the NWA is to nondeterministically guess all possible summands when reading the first half of the input string. The guessed summands are characterized by the states pushed onto the stack. The machine then checks if the guessed summands can produce the input bits in the second half of the string. The machine keeps track of any carries in the current state. 

To check the correctness of our NWAs, we built an NWA-simulator, and then ran simulations of the machines on various types of inputs, which we then checked against experimental results. 

For instance, we built the machine that accepts representations of integers that can be expressed as the sum of 2 binary palindromes. We then simulated this machine on every integer from 513 to 1024, and checked that it only accepts those integers that we experimentally confirmed as being the sums of 2 binary palindromes.

The general procedure to prove our results is to build an NWA {\tt PalSum} accepting only those inputs that it verifies as being appropriate sums of palindromes, as well as an NWA {\tt SyntaxChecker} accepting all valid representations. We then run the decision algorithms for language inclusion, language emptiness, etc. on {\tt PalSum} and {\tt SyntaxChecker} as needed. To do this, we used the Automata Library toolchain of the ULTIMATE program analysis framework \cite{Heizmann&co:2013,Heizmann&co:2016}. 

We have provided links to the proof scripts used to establish all of our results. To run these proof scripts, simply copy the contents of the script into
\url{https://monteverdi.informatik.uni-freiburg.de/tomcat/Website/?ui=int&tool=automata_library} and click ``execute".

\section{Proving Theorem~\ref{nwap}}
\label{nwap-proof}

In this section, we discuss construction of the appropriate nested-word automaton in more detail.   

\begin{proof} (of Theorem~\ref{nwap})

We build three separate automata. The first, {\tt palChecker}, has 9 states, and simply checks whether the input number is a binary palindrome. The second, {\tt palChecker2}, has 771 states, and checks whether an input number of length $n$ can be expressed as the sum of three binary palindromes of lengths $n$, $n-2$, and $n-3$. The third machine, {\tt palChecker3}, has 1539 states, and checks whether an input number of length $n$ can be expressed as the sum of three binary palindromes of lengths $n-1$, $n-2$, and $n-3$. We then determinize these three machines, and take their union, to get a single deterministic NWA, {\tt FinalAut}, with 36194 states. 

The language of valid inputs to our automata is given by $$ L = \{ \{a,b\}^n  \{c,d\}^m \{e,f\}^n \ :  \ 0 \leq m \leq 1, \ n \geq 4
\} .$$

We only detail the mechanism of {\tt palChecker3} here. Let $p,q$ and $r$ be the binary palindromes representing the guessed $n$-length summand, $(n-2)$-length summand and $(n-3)$-length summand respectively.
The states of {\tt palChecker3} include 1536 $t$-states that are 10-tuples. We label these states $(g, x, y,z,k,l_1,l_2,m_1,m_2,m_3)$, where $0 \leq g \leq 2$, while all other coordinates are either 0 or 1. The $g$-coordinate indicates the current carry, and can be as large as 2. The $x, y$ and $z$ coordinates indicate whether we are going to guess $0$ or $1$ for the next guesses of $p,q$ and $r$ respectively. The remaining coordinates serve as ``memory'' to help us manage the differences in lengths of the guessed summands. The $k$-coordinate remembers the most recent guess for $p$. We have $l_1$ and $l_2$ remember the two most recent $q$ guesses, with $l_1$ being the most recent one, and we have $m_1, m_2$ and $m_3$ remember the three most recent $r$-guesses, with $m_1$ being the most recent one, then $m_2$, and $m_3$ being the guess we made three steps ago. We also have three $s$-states labeled $s_0, s_1$ and $s_2$, representing carries of 0, 1 and 2 respectively. These states process the second half of the input string.

The initial state of the machine is $(0,1,1,1,0,0,0,0,0,0)$ since we start with no carry, must guess 1 for our first guess of a valid binary palindrome, and all ``previous'' guesses are 0.
A $t$-state has an outgoing transition on either $a$ or $b$, but not both. If $g+x+y+z$ produces an output bit of 0, it takes a transition on $a$, else it takes a transition on $b$.
The destination states are all six states of the form $(g', x', y',z',x,y,l_1,z,m_1,m_2)$, where $g'$ is the carry resulting from $g+x+y+z$, and $x',y',z'$ can be either 0 or 1.
Note that we ``update'' the remembered states by forgetting $k, l_2$ and $m_3$, and saving $x,y$ and $z$.

The $s$-states only have transitions on the return symbols $e$ and $f$. When we read these symbols, we pop a $t$-state off the stack. If state $s_i$ pops the state $(g, x, y,z,k,l_1,l_2,m_1,m_2,m_3)$ off the stack, its transition depends on the addition of $i+k+l_2+m_3$. If this addition produces a carry of $j$, then $s_i$ can take a transition to $s_j$ on $e$ if the output bit produced is 0, and on $f$ otherwise. 
By reading the $k$, $l_2$ and $m_3$ values, we correctly realign $p$, $q$ and $r$, correcting for their different lengths. This also ensures that we supply 0s for the last three guesses of $r$, the last two guesses of $q$ and the last guess of $p$.
The only accepting state is $s_0$.

It remains to describe how we transition from $t$-states to $s$-states. This transition happens when we are halfway done reading the input. If the length of the input is odd, we read a $c$ or a $d$ at the halfway point. We only allow certain $t$-states to have outgoing transitions on $c$ and $d$. Specifically, we require the state's coordinates to satisfy $k=x$, $l_2=y$, $m_1=m_2$ and $m_3=z$. These conditions are required for $p$, $q$ and $r$ to be palindromes. We transition to state $s_i$ if the carry produced by adding $g+x+y+z$ is $i$, and we label the transition $c$ if the output bit is 0, and $d$ otherwise.

If the length of the input is even, then our $t$-states take a return transition on $e$ or $f$. Once again, we restrict the $t$-states that can take return transitions. We require the state's coordinates to satisfy $l_1 = l_2$ and $m_1 = m_3$ to ensure our guessed summands are palindromes. Let the current state be $(g, x, y,z,k,l_1,l_1,m_1,m_2,m_1)$, and the state at the top of the stack be $(g', x', y',z',k',l_1',l_2',m_1',m_2',m_3')$. We can take a transition to $s_i$ if the sum $g+k'+l_2'+m_3'$ produces a carry of $i$, and we label the transition $e$ if the output bit is 0, and $f$ otherwise. 

The structure and behavior of {\tt palChecker2} is very similar. One difference is that there is no need for a $k$-coordinate in the $t$-states since the longest summand guessed is of the same length as the input.

The complete script executing this proof is over 750000 lines long. Since these automata are very large, we wrote two C\texttt{++} programs to generate them. Both the proof script, and the programs generating them can be found at \codelink.
It is worth noting that a $t$-state labeled as $(g, x, y,z,k,l_1,l_2,m_1,m_2,m_3)$ in this report is labeled $ q\_g\_xyz\_k\_l_1 l_2\_m_1 m_2 m_3$ in the proof script. Also, ULTIMATE does not currently have a union operation for NWAs, so we work around this by using De Morgan's laws for complement and intersection.
\end{proof}

\section{Bases 3 and 4}
\label{larger-bases}

In this section we prove analogous results for bases 3 and 4. We show that every natural number
is the sum of at most three base-3 palindromes, and at most three base-4 palindromes.  Because the NWAs needed became too large for us to manipulate effectively, we
use a modified approach using nondeterministic 
finite automata to prove these results,

Our result for base 3 is as follows:

\begin{theorem}
For all $n \geq 9$, every integer whose base-3 representation is of length $n$
is the sum of 
\begin{itemize}
\item[(a)]  three base-3 palindromes of lengths $n$, $n-1$, and $n-2$; or
\item[(b)]  three base-3 palindromes of lengths $n$, $n-2$, and $n-3$; or
\item[(c)]  three base-3 palindromes of lengths $n-1$, $n-2$, and $n-3$; or
\item[(d)]  two base-3 palindromes of lengths $n-1$ and $n-2$.
\end{itemize}
\label{base3}
\end{theorem}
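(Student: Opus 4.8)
The plan is to mirror the automaton-based strategy used for Theorem~\ref{nwap}, but to exploit the fact that here all summands have lengths differing from $n$ by at most $3$ in order to dispense with the stack entirely and work with ordinary nondeterministic finite automata, for which universality and language inclusion remain decidable. The obstacle to using a plain NFA is that checking that a guessed summand is a palindrome ordinarily requires matching digit $i$ against digit $\ell+1-i$, which needs unbounded memory. I would remove this difficulty by \emph{folding} the input: rather than feeding the base-$3$ representation of $N$ digit by digit, I feed it as a sequence of pairs, pairing the $j$-th least significant digit with the $j$-th most significant digit. Under this encoding a palindrome summand contributes the \emph{same} digit to both coordinates of a folded position, so committing to one digit simultaneously fixes both ends and the palindrome constraint becomes a purely local, finite-state condition.

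First I would build a \texttt{PalSum} NFA that, reading $N$ in this folded form, nondeterministically guesses at each folded step one new digit for each palindrome summand in the current case, using the same guessed digit at the corresponding low and high positions. Since cases (a)--(d) involve summands of lengths $n,n-1,n-2,n-3$, the machine must also carry a bounded amount of ``memory'' recording the few most recent guesses, exactly as the tuples $(g,x,y,z,k,l_1,l_2,m_1,m_2,m_3)$ did in the proof of Theorem~\ref{nwap}, so that summands of different lengths can be realigned and so that shorter summands correctly contribute $0$ to the top digits. One branch of the machine is built for each of (a)--(d) and their union is formed; a separate \texttt{SyntaxChecker} NFA accepts exactly the folded encodings of length-$n$ base-$3$ inputs for $n \ge 9$, enforcing that each summand has a nonzero leading digit so that its length is exactly as claimed.

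The delicate point is the treatment of carries. Addition carries propagate from the least significant digit toward the most significant, i.e.\ from the outer low end inward; but folding forces the machine to process a high position before the carry entering it, which comes from a more interior low position, has been computed. I would resolve this by having the automaton \emph{guess} the carry flowing into each high-end addition and verify, when the two halves meet in the middle, that the guessed high-end carry is consistent with the carry actually produced by the low-end computation. Because at most three numbers are added, all carries lie in $\{0,1,2\}$, so these guesses are finitely branching and stay within an NFA. Separate handling of even- and odd-length $N$, where the odd case has one extra internal middle digit, is folded into the midpoint transition, just as the $c/d$ transitions did above.

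Finally, I would discharge the theorem by checking, with the decision procedure for NFAs, that the language of \texttt{SyntaxChecker} is contained in the language of the union \texttt{PalSum} automaton; a ``yes'' answer certifies that every length-$n$ base-$3$ integer with $n \ge 9$ falls into at least one of (a)--(d). I expect the main difficulty to be neither the decision procedure itself nor the palindrome check, but rather \emph{discovering and justifying the exact case list} (a)--(d): the four length patterns must be chosen so that their union is genuinely universal over all valid inputs. If the inclusion check fails, it returns a concrete counterexample $N$, which would then have to be analyzed to enlarge or adjust the set of cases until universality holds.
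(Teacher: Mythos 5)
Your proposal is essentially the paper's own proof: the paper likewise folds the base-3 input into digit pairs read from opposite ends, nondeterministically guesses summand digits with a bounded window of recent guesses to realign the length-$n$, $n-1$, $n-2$, $n-3$ summands, guesses the high-end carries and verifies consistency with the low-end carries at the midpoint (with separate even/odd-length handling), and then discharges universality by an NFA language-inclusion check in ULTIMATE over the union of the four case machines against a syntax checker. The only cosmetic difference is that the paper fixes the alignment by emitting the two most significant digits unfolded before the paired portion, but this is an implementation detail of the same construction.
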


\begin{proof}
We represent the input in a ``folded'' manner over the input alphabet
$\Sigma_3 \cup (\Sigma_3 \times \Sigma_3)$, where
$\Sigma_k = \{ 0,1, \ldots, k-1 \}$, giving the machine 2
letters at a time from opposite ends. This way we can directly guess
our summands without having need of a stack at all. We align the input
along the $n-2$ summand by providing the first 2 letters of the input separately.

If $(N)_3 = a_{2i+1}a_{2i}\cdots a_0$,
we represent $N$ as the word 
$$a_{2i+1}a_{2i}[a_{2i-1}, a_{i-1}][a_{2i-2}, a_{i-2}]
\cdots [a_{i}, a_{0}]$$.

Odd-length inputs leave a trailing unfolded letter at the end of their input. 
If $(N)_3 = a_{2i+2}a_{2i+1}\cdots a_0$,
we represent $N$ as the word 
$$a_{2i+2}a_{2i+1}[a_{2i}, a_{i-1}][a_{2i-1}, a_{i-2}]
\cdots [a_{i+1}, a_{0}]a_{i}$$.

We need to simultaneously carry out addition on both ends. 
In order to do this we need to keep track of two carries. 
On the lower end, we track the ``incoming'' carry at the start of 
an addition, as we did in the proofs using NWAs. On the higher end,
however, we track the expected ``outgoing'' carry. 

To illustrate how our machines work, we consider an NFA accepting 
length-$n$ inputs that are the sum of 4 base-3 palindromes, one each
of lengths $n$, $n-1$, $n-2$ and $n-3$. Although this is not a case
in our theorem, each of the four cases in our theorem can be obtained
from this machine by striking out one or more of the guessed summands.

Recall that we aligned our input along the length-$(n-2)$ summand by
providing the 2 most significant letters in an unfolded manner.
This means that our guessed for the length-$n$ summand 
will be ``off-by-two'': when we make a guess at the higher end of the
length-$n$ palindromic summand, its appearance at the lower end 
is 2 steps away. We hence need to remember the last 2 guesses at 
the higher end of the length-$n$ summand in our state. Similarly, 
we need to remember the most recent higher guess of the length-$(n-1)$
summand, since it is off-by-one. The length-$(n-2)$ summand is perfectly
aligned, and hence nothing needs to be remembered. The length-$(n-3)$ summand
has the opposite problem of the length-$(n-1)$ input. Its lower guess only
appears at the higher end one step later, and so we save the most recent
guess at the lower end. 

Thus, in this machine, we keep track of 6 pieces of information: 

\begin{itemize}
\item $c_1$, the carry we are expected to produce on the higher end,
\item$c_2$, the carry we have entering the lower end,
\item $x_1$ and $x_2$, the most recent higher guesses of the length-$n$ summand,
\item $y$, the most recent higher guess of the length-$(n-1)$ summand, and
\item $z$, the most recent lower guess of the length-$(n-3)$ summand,
\end{itemize}

Consider a state $(c_1, c_2, x_1, x_2, y, z)$. Let $0 \leq i, j, k, l < 3$ be our
next guesses for the four summands of lengths $n$, $n-1$, $n-2$ and $n-3$ respectively.
Also, let $\alpha$ be our guess for the next incoming carry on the higher end.
Let the result of adding $i+j+k+z+\alpha$ be a value $0 \leq p_1 < 3$ and a carry of
$q_1$.
Let the result of adding $x+y+k+l+x_2$ be a value $0 \leq p_2 < 3$ and a carry of $q_2$.
We must have $q_1 = c_1$. If this condition is met, we add a transition from this state
to $(\alpha, q_2, x_2, i, j, l)$, and label the transition $[p_1, p_2]$.

The initial state is $(0,0,0,0,0,0)$. We expand the alphabet to include 
special variables for the first 3 symbols of the input string. This is to
ensure that we always guess a $1$ or a $2$ for the first (and last) positions of our
summands. 

The acceptance conditions depend on whether $(N)_3$ is of even or odd length.
If a state $(c_1, c_2, x_1, x_2, y, z)$ satisfies $c_1 = c_2$ and 
$x_1 = x_2$, we set it as an accepting states. 
A run can only terminate in one of these states if $(N)_3$ is of even length.
We accept since we are confident that our guessed summands are palindromes 
(the condition $x_1 = x_2$ ensures our length-$n$ summand is palindromic), and
since the last outgoing carry on the lower end is the expected first 
incoming carry on the higher end (enforced by $c_1=c_2$).

We also have a special symbol to indicate the trailing symbol of an
input for which $(N)_3$ is of odd length. We add transitions from our states to 
a special accepting state, $q_acc$, if we read this special symbol.
Consider a state $(c_1, c_2, x_1, x_2, y, z)$, and let $0 \leq k < 3$ 
be our middle guess for the $n-2$ summand. Let the result of 
adding $x_1+y+k+z+c_2$ be a value $0 \leq p < 3$ and a carry of
$q$. If $q=c_1$, we add a transition on $p$ from our state to $q_acc$.

We wrote a C\texttt{++} program generating the machines for each case of this theorem.
After taking the union of these machines, determinizing and minimizing, the machine
has 378 states. We then built a second NFA that accepts folded representations of
$(N)_3$ such that the unfolded length of $(N)_3$ is greater than 8. We then use
ULTIMATE to assert that the language accepted by the second NFA is included in 
that accepted by the first. All these operations run in under a minute.

We tested this machine by experimentally calculating which values of 
$243 \leq N \leq 1000$ could be written as the sum of palindromes satisfying
one of our 4 conditions. We then asserted that for all the folded representations
of $243 \leq N \leq 1000$, our machine accepts these values which we
experimentally calculated, and rejects all others.

\end{proof}

We also have the following result for base 4:

\begin{theorem}
For all $n \geq 7$, every integer whose base-4 representation is of length $n$
is the sum of 
\begin{itemize}
\item[(a)]  exactly one palindrome each of lengths $n-1$, $n-2$, and $n-3$; or
\item[(b)]  exactly one palindrome each of lengths $n$, $n-2$, and $n-3$.
\end{itemize}
\label{base4}
\end{theorem}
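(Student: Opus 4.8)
The plan is to mirror the construction used for base~3 in Theorem~\ref{base3}, since the two length patterns here, $(n-1,n-2,n-3)$ and $(n,n-2,n-3)$, are exactly cases (c) and (b) of that theorem specialized to base~4. First I would feed the input in the same ``folded'' manner over the alphabet $\Sigma_4 \cup (\Sigma_4 \times \Sigma_4)$, supplying the two most significant digits unfolded so that the machine is aligned along the length-$(n-2)$ summand, and then pairing the remaining digits from opposite ends; an odd-length base-4 representation again leaves a single trailing unfolded letter marking the center.

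Next I would build the illustrative four-summand NFA that simultaneously guesses palindromes of lengths $n$, $n-1$, $n-2$, and $n-3$, and obtain each of cases (a) and (b) by striking out the appropriate summand (the length-$n$ summand for (a), the length-$(n-1)$ summand for (b)). The state would record two carries, $c_1$ (the outgoing carry we commit to produce at the high end) and $c_2$ (the incoming carry at the low end), together with the same off-by bookkeeping: the two most recent high guesses $x_1,x_2$ of the length-$n$ summand (off-by-two), the most recent high guess $y$ of the length-$(n-1)$ summand (off-by-one), and the most recent low guess $z$ of the length-$(n-3)$ summand (off-by-one on the opposite side), while the length-$(n-2)$ summand is perfectly aligned and needs no memory. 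Transitions, the even- versus odd-length acceptance conditions (the $c_1=c_2$ and palindrome-matching tests, and the trailing-symbol transition to a dedicated accepting state), and the special high-order symbols forcing nonzero leading and trailing digits would all be defined as in the base-3 proof. The only genuinely base-dependent changes are arithmetic: guesses now range over $\{0,1,2,3\}$, and the output digit and carry at each position are computed modulo $4$ rather than modulo $3$, with the carries still bounded by a small constant.

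Finally, I would generate this master machine with a short C\texttt{++} program, take the union of the two case-machines, determinize and minimize, and build a second NFA (a syntax checker) accepting precisely the folded representations of base-4 integers of unfolded length greater than $6$. The theorem then follows by asserting in ULTIMATE that the syntax-checker language is included in the language of the case-union machine. As a sanity check I would also verify by direct computation, for all $N$ with $4^{6} \le N < 4^{7}$, that the machine accepts exactly those inputs meeting condition (a) or (b).

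I expect the main obstacle to be neither the coding nor the arithmetic but the a~priori nonobvious fact that only two length patterns suffice for every $n \ge 7$: whereas base~3 required four cases, here just these two survive, and there is no elementary reason to anticipate this. It is precisely this coverage claim that the language-inclusion check settles, so the crux of the argument lies in setting up (and trusting) the decision procedure correctly rather than in producing a case-based hand proof.
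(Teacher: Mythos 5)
Your proposal matches the paper's proof essentially exactly: the paper also reuses the base-3 folded-representation NFA generator, changing only the input base to $4$ and generating just the machines for the two stated length patterns, then minimizes (obtaining a 478-state machine) and discharges the coverage claim via the automated language-inclusion check. No substantive differences.
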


\begin{proof}
The NFA we build is very similar to the machine described for the base-3 proof.
Indeed, the generator used is the same as the one for the base-3 proof, 
except that its input base is 4, and the only machines it generates are
for the two cases of this theorem. The minimized machine has 478 states.
\end{proof}

This, together with the results previously obtained by Cilleruelo, Luca, \& Baxter, completes the additive theory of palindromes for all integer bases $b \geq 2$.

\section{Variations on the original problem}
\label{variations}

In this section we consider some variations on the original problem.
Our first variation involves the notion of generalized palindromes.

\subsection{Generalized palindromes}
\label{genpal}

We define a {\it generalized palindrome of length $n$} to be an integer whose base-$k$ representation, extended, if necessary, by adding leading zeroes to make it of length $n$, is a palindrome.
For example, $12$ is a generalized binary palindrome of length $6$, since its length-$6$ representation $001100$ is a palindrome.  If a number is a generalized palindrome of any length, we call it a {\it generalized palindrome}.
The first few binary generalized palindromes are
$$ 0,1,2,3,4,5,6,7,8,9,10,12,14,15,16,17,18,20,21,24,27,28,30,31,32, \ldots ;$$
they form sequence \seqnum{A057890} in the OEIS. 

We can use our method to prove

\begin{theorem}
Every natural number is the sum of at most $3$ generalized binary palindromes.
\end{theorem}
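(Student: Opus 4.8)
The plan is to reuse the nested-word automaton decision procedure developed for Theorem~\ref{nwap}, exploiting the one feature that distinguishes generalized palindromes from ordinary ones: because we are now allowed to pad with leading zeroes, \emph{all three summands may be taken to have exactly the same length $n$ as the input}. This single observation removes the main complication of the earlier proof. In Theorem~\ref{nwap} the three summands had different lengths ($n$, $n-2$, $n-3$, and so on), which forced us to carry six bookkeeping coordinates ($k, l_1, l_2, m_1, m_2, m_3$) in each state in order to realign the palindromes. Here every summand is length $n$, so no realignment is needed and a state need only track the running carry together with the bits currently being guessed. Moreover, since $0$ is itself a generalized palindrome, a representation using fewer than three summands is subsumed by padding with zeroes, so ``exactly three of length $n$'' and ``at most three'' coincide.

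Concretely, I would build an NWA \texttt{PalSum} that, given an $n$-bit input fed least-significant-bit first, nondeterministically guesses the low halves of three length-$n$ generalized binary palindromes. As in Section~\ref{solve}, the first half of the input consists of call symbols and the second half of return symbols, with an internal symbol serving as the divider for odd $n$. While reading the call symbols the machine guesses one bit of each summand, adds these three bits to the incoming carry, requires the resulting output bit to match the current input bit, and pushes the guessed bit-triple onto the stack. While reading the return symbols it pops the matching low-order triple; since each summand is a palindrome of length $n$, its high-order bit equals the popped low-order bit, so the addition in the second half is completely determined, and we again require each output bit to match the input while propagating the carry. A companion NWA \texttt{SyntaxChecker} accepts exactly the well-formed encodings of $n$-bit inputs with $n$ at least some threshold. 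After determinizing and forming the union (worked around via De~Morgan's laws, as before, since ULTIMATE lacks a direct union operation for NWAs), I would invoke the language-inclusion test to certify that every sufficiently long input is accepted, and then verify the finitely many short cases by direct computation.

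The step I expect to require the most care is the treatment of the topmost carry. Three length-$n$ numbers can sum to a number of length as large as $n+2$, whereas the input has length exactly $n$; hence acceptance must insist that the carry out of the most significant position is $0$, and the carry coordinate must be allowed to range over $0$--$2$ (as in Theorem~\ref{nwap}) rather than just $0$ and $1$. A secondary subtlety is that the palindrome-with-leading-zeroes condition couples the two ends: if a guessed summand carries a $0$ in a high position it is forced to carry $0$ in the mirror low position, but this is enforced automatically by the push/pop matching, so no extra state is required. Finally, one must pin down the base-case threshold (the smallest $n$ for which the inclusion holds) and confirm that the determinized automaton stays within the size range ULTIMATE can handle; since the state space here lacks the realignment coordinates and is therefore strictly smaller than that of \texttt{palChecker3}, this last point should not be an obstacle.
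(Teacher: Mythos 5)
There is a genuine gap, and it is fatal: your central simplification --- that because leading zeroes are allowed, ``all three summands may be taken to have exactly the same length $n$ as the input'' --- is false. Consider $N = 2^{n-1}$ for even $n$. For $n=4$ the generalized palindromes of length $4$ are $\{0,6,9,15\}$, and no three of them sum to $8$; similarly $32$ is not a sum of three generalized palindromes of length $6$ (the even ones are $\{0,12,18,30\}$, and any two odd ones already exceed $32$), and $128$, $512$ fail in the same way for $n=8,10$. This is not a small-$n$ artifact: for every even $n$, write each length-$n$ summand as $p_i = x_i + 2^{n/2}\,\mathrm{rev}(x_i)$, where $x_i$ is its low half. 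Two odd summands would exceed $N$, so all three are even, forcing $S = x_1+x_2+x_3 \equiv 0 \pmod{2^{n/2}}$ and hence $S \in \{2^{n/2}, 2^{n/2+1}\}$; a size estimate together with a binary digit-sum parity argument (using $s(a+b+c) \equiv s(a)+s(b)+s(c) \pmod{2}$ and $s(\mathrm{rev}(x)) = s(x)$, applied at the top level and again to the inner bit strings) rules out both cases. Consequently the language-inclusion test you propose would fail for \emph{every} threshold on $n$: the decision procedure would correctly refuse to certify your claim, and no proof of the theorem would result.

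The paper avoids exactly this trap by keeping mixed lengths: its automaton {\tt gpalChecker} certifies that every length-$n$ integer, $n \geq 6$, is the sum of at most two generalized palindromes of length $n$ and one of length $n-1$ (with $N < 32$ checked by hand). The unequal lengths are not a leftover complication from Theorem~\ref{nwap} to be optimized away --- they are necessary, and they cost only a small amount of realignment bookkeeping, since the third summand is off by just one position. Your other observations are sound and agree with the paper's construction: the string of $n$ zeroes is a generalized palindrome, so ``at most three'' reduces to ``exactly three''; the LSB-first call/internal/return encoding carries over; and the machine is indeed a greatly simplified version of {\tt palChecker3} (the paper's nondeterministic machine has only 39 states). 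Only the same-length reduction must be abandoned, and with it the claim that no realignment coordinates are needed.
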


\begin{proof}
For $N < 32$, the result can easily be checked by hand. 

We then use our method to prove the following claim:

Every length-$n$ integer, $n \geq 6$,
is the sum of at most two generalized
palindromes of length $n$ and one of length $n-1$. 

We build a single NWA, {\tt gpalChecker}, accepting representations of length-$n$ integers that can be written as the sum of at most two generalized
palindromes of length $n$ and one of length $n-1$. The structure of {\tt gpalChecker} is a highly simplified version of
the one described in Section ~\ref{nwap-proof}. With generalized palindromes we allow the first guessed bit to be zero. We do not have to worry about the ``at most'' clause in our claim, since the string of $n$ 0s is a valid generalized palindrome. The nondeterministic machine has 39 states, which grows to 832 states on determinization. As before, we build an NWA {\tt syntaxChecker} that accepts all valid inputs, and assert that the language accepted by {\tt syntaxChecker} is included in that accepted by {\tt gpalChecker}.

The complete proof script, as well as the C\texttt{++} program generating it can be found at \codelink. 

\end{proof}

As an aside, we also mention the following enumeration result.

\begin{theorem}
There are exactly $3^{\lceil n/2\rceil}$ natural numbers
that are the sum of two generalized binary palindromes of the same length $n$.
\end{theorem}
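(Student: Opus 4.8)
The plan is to reduce the count to a statement about palindromic digit strings over $\{0,1,2\}$, and then prove that no two such strings collide to the same integer. First I would write a length-$n$ generalized binary palindrome as $\sum_{j=0}^{n-1} p_j 2^j$ with $(p_0,\dots,p_{n-1})\in\{0,1\}^n$ satisfying $p_j = p_{n-1-j}$. For two such palindromes $p,q$,
$$ p + q = \sum_{j=0}^{n-1} (p_j + q_j)\,2^j = \sum_{j=0}^{n-1} s_j\,2^j, \qquad s_j := p_j + q_j \in \{0,1,2\}, $$
and the sequence $(s_j)$ is itself a palindrome over $\{0,1,2\}$, since $s_j = p_j+q_j = p_{n-1-j}+q_{n-1-j} = s_{n-1-j}$. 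Conversely, every palindromic $(s_j)\in\{0,1,2\}^n$ is realizable: set $(p_j,q_j)$ equal to $(0,0)$, $(1,0)$, or $(1,1)$ according as $s_j = 0,1,2$, making the same choice at the mirror index $n-1-j$, so that $p,q$ are genuine length-$n$ palindromes. Hence the set of attainable sums is exactly
$$ \Bigl\{ \sum_{j=0}^{n-1} s_j\,2^j \ : \ (s_j)\in\{0,1,2\}^n \text{ is a palindrome} \Bigr\}. $$
A palindrome of length $n$ over a $3$-letter alphabet is freely determined by its first $\lceil n/2\rceil$ entries, so there are $3^{\lceil n/2\rceil}$ such sequences; it then suffices to show distinct sequences give distinct values.

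The main obstacle will be this no-collision claim, which I would isolate as a lemma: if $(d_j)_{0\le j<n}$ is a palindrome with each $d_j\in\{-2,-1,0,1,2\}$ and $\sum_{j=0}^{n-1} d_j 2^j = 0$, then every $d_j = 0$. (One applies it to the difference of two column-sum sequences.) The subtlety is carries: without the palindrome constraint there are many nonzero null relations, such as $2\cdot 2^j - 1\cdot 2^{j+1} = 0$, so the whole force of the argument must come from the symmetry. Assuming $(d_j)$ is not identically zero, I would let $J$ be the largest index with $d_J\ne 0$ and $L$ the smallest; symmetry forces $L = n-1-J$ and $d_L = d_J$. Dividing the relation by $2^L$ yields $\sum_{j=L}^{J} d_j 2^{j-L} = 0$ with constant term $d_L$, and reducing modulo $2$ shows $d_L$ is even, hence $|d_L| = |d_J| = 2$. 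The decisive estimate is then that the top term has magnitude $|d_J|\,2^{J-L} = 2^{\,J-L+1}$, while all lower terms together have magnitude at most $2\sum_{j=L}^{J-1} 2^{j-L} = 2^{\,J-L+1} - 2$, strictly smaller, so the top term cannot be cancelled and the sum is nonzero, a contradiction. (If $J = L$, forced to be the middle index of an odd-length string, the relation reads $d_L 2^L = 0$, again giving $d_L = 0$.)

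Combining the two parts finishes the argument: the value map restricted to palindromic $\{0,1,2\}$-sequences is injective, so the number of natural numbers expressible as a sum of two length-$n$ generalized binary palindromes equals the number of such sequences, namely $3^{\lceil n/2\rceil}$. I expect the reduction and the realizability check to be routine; the only genuinely delicate point is the no-collision lemma, and the key observation that makes it clean is that the palindrome symmetry pins the extremal nonzero digit to $\pm 2$ via the mod-$2$ step, after which a one-line size comparison closes the case.
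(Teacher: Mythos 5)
Your proof is correct, and its first half coincides with the paper's: both reduce the count to palindromic strings over $\{0,1,2\}$ evaluated in base $2$ (you additionally spell out the realizability direction, which the paper leaves implicit), so everything hinges on the injectivity of that evaluation map on ternary palindromes. Where you diverge is in how injectivity is proved. The paper argues by induction on the length: if two ternary palindromes $x,y$ of the same length evaluate to the same number, then comparing mod $2$ either they end in the same digit $i$ --- in which case one subtracts $i0\cdots0i$ from both and applies the induction hypothesis to the inner strings --- or (say) $x$ ends in $0$ and $y$ ends in $2$, which is ruled out by a size estimate: the largest possible value $[0\,2\cdots2\,0]_2 = 2^n-4$ falls short of the smallest possible value $[2\,0\cdots0\,2]_2 = 2^n+2$. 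You instead pass to the difference sequence $(d_j)$, a palindrome over $\{-2,\dots,2\}$ with $\sum_j d_j 2^j=0$, and close the argument in one extremal step: reducing mod $2$ forces the lowest nonzero digit $d_L$ to be $\pm 2$, the palindrome symmetry transports this to the highest nonzero digit via $d_J=d_L$, and then the top term of magnitude $2\cdot 2^{J-L}=2^{J-L+1}$ strictly exceeds $2^{J-L+1}-2$, the largest possible contribution of all lower terms. Your version avoids induction entirely --- the symmetry relation $d_J=d_L$ between the two extremal nonzero digits does the work of the paper's inductive peeling of outer digits --- while the paper's second case plays the role of your final magnitude comparison. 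Both arguments are equally elementary; yours has the mild advantage of isolating the one nontrivial fact as a self-contained lemma about signed-digit null relations, which makes the role of the palindrome symmetry (as the only thing blocking carry-based collisions like $2\cdot 2^j - 1\cdot 2^{j+1}=0$) completely explicit.
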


\begin{proof}
Take two generalized binary palindromes of length $n$ and add them
bit-by-bit.  Then each digit position is either $0, 1$, or $2$, and the
result is still a palindrome.  Hence there are at most
 $3^{\lceil n/2\rceil}$ such numbers.  It remains to see they are all
 distinct.

We show that if $x$ is a length-$n$ word that is a palindrome over
$\{0,1,2\}$, then the map that sends $x$ to its evaluation in base 2
is injective.  

We do this by induction on the length of the palindrome.  The claim is easily verified for length $0$ and $1$.  Now suppose $x$ and $y$ are two distinct strings of the
same length that evaluate to the same number in base 2.   Then, by
considering everything mod 2, we see that either
\begin{itemize}
\item[(a)] $x$ and $y$ both end in the same number $i$, or
\item[(b)] (say) $x$ ends in 0 and $y$ ends in 2.
\end{itemize}

In case (a) we know that $x$ and $y$, since they are both palindromes,
both begin and end in $i$.  So subtracting $i00\cdots 0i$ from $x$ and $y$
we get words $x'$, $y'$ of length $n-2$ that evaluate to the
same thing, and we can apply induction to see $x' = y'$.

In case (b), say $x = 0 x' 0$ and $y = 2 y'2$, and they evaluate to the
same thing.   However, the largest $x$ could be is $0222\cdots 20$
and the smallest $y$ could be is $200\cdots 02$.  The difference between these two numbers
is 6, so we conclude that this is impossible.
\end{proof}

\begin{remark} 
Not every natural number is the sum of 2 generalized binary palindromes; the smallest exception is $157441$.  The list of exceptions forms sequence  \seqnum{A278862} in the OEIS.  It does not seem to be known currently whether there are infinitely many exceptions. 
\end{remark}

\subsection{Antipalindromes}

Another generalization concerns antipalindromes.   Define a map from $\Sigma_2^*$ to $\Sigma_2^*$ by changing every $0$ to $1$ and vice-versa; this is denoted $\overline{w}$.  Thus, for example,
$\overline{010} = 101$.  An {\it antipalindrome} is a word $w$ such that $w = \overline{w^R}$.  It is easy to see that if a word $w$ is an antipalindrome, then it must be of even length.
We call an integer $n$ an antipalindrome if its canonical base-$2$ expansion is an antipalindrome.  The first few antipalindromic integers are
$$ 2,10,12,38,42,52,56,142,150,170,178,204,212,232,240, \ldots ;$$
they form sequence \seqnum{A035928} in the OEIS.
Evidently every antipalindrome must be even.

To use NWAs to prove results about antipalindromes, we slightly modify our machine structure. While processing the second portion of the input string, we complement the number of ones we guessed in the first portion. Assume we are working with $n$ summands, and let the input symbol be $q \in \{e,f\}$, and the state at the top of the stack be $f_{x,y}$. The state $s_i$ has a transition to $s_j$ if $i+(n-y)$ produces an output bit corresponding to $q$ and a new carry of $j$. We also set the initial state to be $f_{0,0}$, because the least significant bit of an antipalindrome must be 0.

Using our method, we would like to prove the following conjecture.  However, so far we have been unable to complete the computations because our program runs out of space.  In a future version of this paper, we hope to change the status of the following conjecture to a theorem.

\begin{conjecture}
Every even integer of length $n$, for $n$ odd and $n \geq 9$,
is the sum of at most $10$ antipalindromes of length $n-3$. 
\label{antipal}
\end{conjecture}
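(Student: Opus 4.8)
The plan is to follow the decision-procedure methodology of Sections~\ref{nwap-proof} and~\ref{larger-bases}, reducing the conjecture to a single language-inclusion test between two automata, and then to confront the state-explosion that defeated the first attempt. First I would build a machine \texttt{AntiPalSum} that, on input the base-$2$ representation of an even length-$n$ integer (with $n$ odd), guesses at most ten antipalindromes of length $n-3$ and verifies both that each guessed summand is a genuine antipalindrome and that they sum to the input. The antipalindrome condition is enforced exactly as sketched above for the modified NWA: a bit guessed in the low half at a given summand position must reappear complemented at the mirror position in the high half. The crucial simplification is that neither the addition nor the antipalindrome check needs to remember the individual summand bits: in each column only the number $y$ of guessed $1$'s matters for the carry, and by the antipalindrome property the mirror column of any fixed family of $k$ active summands contains exactly $k-y$ ones. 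Thus the stack (or the paired letters) need record only a column count $y\in\{0,\dots,k\}$, roughly $11$ values rather than the $2^{10}$ values of a full bit-vector.

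Second, I would dispatch two bookkeeping issues. The ``at most'' clause cannot be handled by padding with zero summands, because $0$ is not an antipalindrome (its length-$m$ expansion $0^m$ satisfies $\overline{(0^m)^R}=1^m\neq 0^m$); instead I would guess the number $k\le 10$ of active summands once at the start, so that the mirror count becomes $k-y$. The length gap of $3$ between the input (length $n$) and the summands (length $n-3$) means that the summands' internal mirror axis is offset by $3$ from the input's; following the realignment trick of the base-$3$ and base-$4$ constructions, I would remember the last three column counts so that each guessed bit is matched against the correctly complemented position, and I would force the top three output bits to arise from carry propagation alone, the summands contributing $0$ there. I would then build \texttt{SyntaxChecker} accepting exactly the valid folded inputs (even integers, $n$ odd, $n\ge 9$), run the inclusion test $L(\texttt{SyntaxChecker})\subseteq L(\texttt{AntiPalSum})$, and finish with a direct computation of the base cases below the threshold.

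The main obstacle, and the reason the authors ran out of space, is the determinization blow-up: a nondeterministic NWA with $N$ states can require $2^{\Theta(N^2)}$ states once determinized, and guessing ten summands per column makes the naive machine far too large to determinize within memory. My principal mitigation is the count-based encoding above, which collapses the combinatorial explosion of bit-vectors into a single count per column and should reduce the nondeterministic machine to a manageable size; one then hopes the determinized automaton stays within the tens of thousands of states, as in the successful base-$2$, base-$3$, and base-$4$ proofs. A second lever is to abandon the stack entirely in favour of the folded finite-automaton model of Section~\ref{larger-bases}: since all ten summands share the length $n-3$, feeding the machine two input letters at a time from opposite ends lets it check the complement condition directly, without ever determinizing a nested-word automaton. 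Finally, since summing at most ten bits never produces a carry exceeding $9$, only ten carry states are needed on each end, and if the constant $10$ proves not to be tight, reducing the number of summands would shrink the state space considerably. The hard part is thus almost entirely the engineering of keeping the determinized machine inside the available memory; the logical content of the construction becomes routine once the count-based encoding is adopted.
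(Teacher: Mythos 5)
There is a genuine gap, and it is the decisive one: your proposal is a plan for a computation, not a proof. The statement you are addressing appears in the paper as a \emph{conjecture} precisely because the authors attempted the same methodology and their program ran out of space before the decision procedure could finish; the paper contains no proof of it at all. Your write-up has exactly the same status. Everything hinges on actually constructing \texttt{AntiPalSum}, complementing/determinizing it, and having the tool certify the inclusion $L(\texttt{SyntaxChecker}) \subseteq L(\texttt{AntiPalSum})$, and that step is never carried out. Until it terminates affirmatively, nothing is established --- and it could in principle terminate negatively, since no one has verified that ten summands, all of the single length $n-3$, actually suffice for every even integer of odd length $n \ge 9$. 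A decision-procedure proof \emph{is} the computation; a description of the machine one would build is not a proof of the theorem any more than the paper's own sketch is.

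Moreover, the mitigation you present as the key new idea --- recording only the number $y$ of ones guessed per column and using the complemented count at the mirrored position --- is already exactly what the paper's antipalindrome machine does: its stack symbols $f_{x,y}$ carry the count $y$, and the return transitions compute $i+(k-y)$ for $k$ summands. The memory exhaustion the authors report happened \emph{despite} this encoding; it stems from determinizing a nested-word automaton (worst case $2^{\Theta(N^2)}$ states) whose states must also track carries as large as $9$, the three-column misalignment between the length-$n$ input and the length-$(n-3)$ summands, and the case split over $k \le 10$. So your principal lever gives no concrete reason to expect success where the authors failed. Your second lever is the genuinely new and promising one: since all summands share the same length, the stack can be discarded in favour of the folded two-ended NFA of Section~\ref{larger-bases}, replacing NWA determinization by ordinary subset construction --- this is plausibly how the conjecture will eventually be settled, and your observation that the ``at most'' clause must be handled by guessing $k$ up front (because $0$ is not an antipalindrome, so zero-padding is unavailable) is correct and necessary. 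But as it stands, your proposal changes the engineering strategy without executing it, so the statement remains exactly what the paper says it is: a conjecture.
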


\begin{corollary} (Conditional only on the truth of Conjecture~\ref{antipal})
Every even natural number $N$ is the sum of at most $13$ antipalindromes.
\label{antip}
\end{corollary}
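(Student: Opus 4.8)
The plan is to derive the bound of $13$ from Conjecture~\ref{antipal} by a reduction that mirrors the passage from Theorem~\ref{nwap} to Corollary~\ref{main}, the crucial new feature being that every summand we are allowed to use is itself even (an antipalindrome ends in $\overline{1}=0$), so the target $N$ must be even as well, exactly as the statement requires. First I would dispose of all small $N$ by a direct finite computation: there is a threshold $N_0$ (which can be taken around $2^8$, matching the hypothesis $n\ge 9$ of the conjecture) below which every even number is checked by machine to be a sum of at most $13$ (indeed far fewer) antipalindromes. For even $N$ whose binary length $n$ is \emph{odd} and at least $9$, Conjecture~\ref{antipal} applies verbatim and exhibits $N$ as a sum of at most $10\le 13$ antipalindromes, so nothing further is needed in that case.

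The work is therefore concentrated in the case where $N$ is even and its length $n$ is \emph{even}. Here I would reduce to the odd-length case by peeling off a bounded number of antipalindromes: writing $N=P_1+\cdots+P_j+M$ with each $P_i$ an antipalindrome of length $n$ and $M=N-(P_1+\cdots+P_j)\ge 0$, I want to choose the $P_i$ so that $M$ is even (automatic, being a difference of even numbers) and either falls below $N_0$ (handled by the finite check) or has \emph{odd} length at least $9$, to which Conjecture~\ref{antipal} then applies. The budget $13=j+10$ forces $j\le 3$, which is exactly the number of correction antipalindromes I would allow. The mechanism is a density estimate: the antipalindromes of length $n$ are parametrized freely by their first half $w_1\cdots w_{n/2}$ (with $w_1=1$), and a short computation shows their values sweep out the interval from $2^{n-1}+2^{n/2}-2$ up to $2^{n}-2^{n/2}$ with gaps of size $O(2^{n/2})$. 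Since the window of values $Q$ for which $N-Q$ has length exactly $n-1$ has width $\Theta(2^{n-2})$, which dwarfs the gap for large $n$, a single such $P_1$ lands $N-P_1$ in the desired odd length whenever $N$ lies in the upper part of its length class; when $N$ lies near the bottom of the class (where no single length-$n$ antipalindrome fits the window) one or two further subtractions walk the remainder down into the correct odd-length window or below $N_0$.

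The main obstacle is precisely this reachability argument for the even-length reduction: one must certify that for \emph{every} even $N$ of even length there is a combination of at most $3$ antipalindromes whose removal lands in the conjecture's domain, and in particular one must cover the awkward low-value sub-interval of each length class uniformly in $n$. Rather than carry this out by a hand case analysis, I would---following the paper's own methodology---phrase it as a statement quantified over all inputs of a fixed syntactic form and discharge it with the nested-word-automaton decision procedure: build an NWA that accepts exactly those even, even-length inputs reducible to an odd-length even number $\ge N_0$ by at most three length-$n$ antipalindromes, and verify by an inclusion test that it accepts all of them. This both removes the only genuinely delicate step and fixes the constant $3$ (hence the overall bound $13$) rigorously, the whole argument remaining conditional only on Conjecture~\ref{antipal} as stated.
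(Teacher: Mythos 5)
Your overall skeleton (finite check below $2^8$, direct application of Conjecture~\ref{antipal} when the length $n$ is odd, and a reduction by at most $3$ antipalindromes when $n$ is even, giving $13 = 3 + 10$) matches the paper. But the reduction mechanism you propose has a genuine gap. You reduce by subtracting antipalindromes of length $n$, i.e., the \emph{same} length as $N$. Every antipalindrome of length $n$ lies in $[2^{n-1} + 2^{n/2} - 2,\ 2^n - 2^{n/2}]$, so for $N$ near the bottom of its length class --- most starkly $N = 2^{n-1}$ --- \emph{no} length-$n$ antipalindrome can be subtracted at all (the smallest one already exceeds $N$), and more generally a single subtraction fits the window $[2^{n-2}, 2^{n-1})$ only when $N \geq 3\cdot 2^{n-2} + 2^{n/2} - 2$, roughly the top half of the class. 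Your proposed rescue, ``one or two further subtractions walk the remainder down,'' cannot work here: if one length-$n$ antipalindrome is already too large (or unavailable), additional subtractions of length-$n$ antipalindromes only make things worse. The same defect sinks the NWA formulation as you state it (``reducible \ldots by at most three length-$n$ antipalindromes''): the inclusion test would simply fail on these inputs, so the decision procedure would not certify the claim but refute it.

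The repair --- and this is what the paper actually does --- is to subtract \emph{shorter} antipalindromes: take the single explicit antipalindrome $A = 2^{n-2} - 2^{n/2-1}$, whose binary expansion is $1^{n/2-1}0^{n/2-1}$ (length $n-2$, which is even since $n$ is even, and one checks directly that $1^k0^k$ is an antipalindrome). Since $A$ is just below $2^{n-2}$, the three translates $[2^{n-2} + jA,\ 2^{n-1} + jA)$ for $j = 1, 2, 3$ overlap consecutively and cover all of $[2^{n-1}, 2^n)$; hence some $j \leq 3$ gives $N' = N - jA \in [2^{n-2}, 2^{n-1})$, an even number (antipalindromes are even, so parity is preserved) of odd length $n-1 \geq 9$, to which the conjecture applies. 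This makes the even-length case a two-line arithmetic argument with one explicit summand, requiring no density estimate and no auxiliary automaton; your instinct that the awkward low sub-interval is ``the only genuinely delicate step'' was right, but the delicacy dissolves once the correction antipalindromes are taken of length $n-2$ rather than $n$.
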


\begin{proof}
For $N < 256$ this can be verified by a short program.
(In fact, for these $n$, only 4 antipalindromes are needed.)  Otherwise assume $N \geq 256$.  Then $N$ is of length $n \geq 9$.  If $n$ is odd, we are done by
Theorem~\ref{antipal}.  Otherwise, $n$ is even.
Then, by subtracting at most $3$ copies of
the antipalindrome $2^{n-2} - 2^{{n \over 2}-1}$ from $N$, we obtain $N'$ even, of  length $n-1$.
We can then apply Theorem~\ref{antipal} to $N'$.
\end{proof}

The $13$ in Corollary~\ref{antip} is probably quite far from the optimal bound.  Numerical evidence suggests

\begin{conjecture}
Every even natural number is the sum of at most $4$ antipalindromes.
\end{conjecture}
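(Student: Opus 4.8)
The plan is to mirror the automaton-based methodology used for Theorems~\ref{base3} and~\ref{base4}, replacing the palindrome constraint with the antipalindrome constraint. First I would reduce the global statement to a length-uniform one, exactly as in the proof of Corollary~\ref{main}: verify by direct computation that every even $N$ below some threshold $2^{n_0}$ is a sum of at most $4$ antipalindromes, and then prove the uniform claim that for all $n \geq n_0$, every even integer of length $n$ is the sum of at most $4$ antipalindromes whose lengths are drawn from a fixed finite set of values near $n$. Since an antipalindrome necessarily has even length, the admissible summand lengths must all be even, so the case analysis will likely split on the parity of $n$ and use summands of lengths such as $n$ and $n-2$ (when $n$ is even) or $n-1$ and $n-3$ (when $n$ is odd), possibly together with one or two shorter summands. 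The exact combination would be found experimentally by searching for a set of lengths for which the language inclusion below actually holds.

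To prove the uniform claim, I would build folded nondeterministic finite automata as in the base-$3$ and base-$4$ proofs, reading the input two bits at a time from opposite ends and tracking both an incoming carry at the low end and an expected outgoing carry at the high end. The machine nondeterministically guesses the (at most four) summands, and the only change from the palindrome machines is the local verification condition: where the palindrome machines require the two mirror-image bits of each guessed summand to be \emph{equal}, the antipalindrome machines require them to be \emph{complementary}, i.e.\ to sum to $1$, which is exactly the complementation idea spelled out in Section~\ref{variations} but now applied in the folded model. As in the base-$3$ proof, the differing lengths of the summands force the machine to remember the last few guesses at each end (the off-by-$d$ alignment memory); I would encode one NFA per length-combination, take their union, determinize, and minimize.

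Finally, I would build a second NFA, \texttt{syntaxChecker}, accepting precisely the folded encodings of even integers of length $n \geq n_0$, and use the ULTIMATE toolchain to decide whether the language of \texttt{syntaxChecker} is included in the language of the summand machine. A positive answer establishes the uniform claim, which together with the finite base-case check yields the theorem. As the authors do elsewhere, I would sanity-check the construction by simulating the machine on a concrete range (say all even $N$ up to a few thousand) and comparing against a brute-force enumeration of antipalindrome sums, to guard against an error in the length combination or in the complementation logic.

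The main obstacle is space, not the correctness of the idea. The authors already report that the analogous antipalindrome computation for Conjecture~\ref{antipal} exhausts memory, and determinization of an NWA can incur a $2^{\Theta(m^2)}$ blow-up. Working with folded NFAs rather than NWAs (as was already necessary for bases $3$ and $4$) and using only four summands should shrink the state count substantially, but the alignment memory still multiplies the state space by a factor exponential in the number of distinct summand lengths, so keeping the length set small and the lengths tightly clustered is essential. The real difficulty is therefore a modeling one: finding a single small family of even summand-length combinations that simultaneously covers every even $N$ of every sufficiently large length $n$, respects the even-length restriction on antipalindromes, and still yields a determinized machine small enough for the inclusion check to terminate within available memory.
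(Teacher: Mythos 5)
There is a genuine gap: what you have written is a research plan, not a proof, and the statement you were asked to prove is in fact left as an open \emph{conjecture} in the paper --- the authors themselves, using exactly the methodology you describe, were unable to establish it. The paper only proves the much weaker conditional Corollary~\ref{antip} (at most $13$ antipalindromes), and even that rests on Conjecture~\ref{antipal} (at most $10$ antipalindromes of length $n-3$ for odd-length even integers), whose automaton computation the authors report they could not complete because the program ran out of space. Your proposal is essentially the same approach, and every step that would actually carry evidential weight is deferred: you do not exhibit a finite family of admissible summand-length combinations, you do not perform the inclusion check, and you do not verify the base cases. A plan whose success is contingent on an unperformed (and, by the authors' own experience, resource-prohibitive) computation establishes nothing; if it did, the paper would state this as a theorem rather than a conjecture.

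There is also a modeling gap you acknowledge but understate. The method requires a \emph{uniform} length-restricted claim: a fixed small set of (necessarily even) summand lengths near $n$ such that every sufficiently large even integer of length $n$ decomposes accordingly. The existence of such a family is itself unknown, and it is constrained in nontrivial ways. For instance, when $n$ is even, four antipalindromes of length $n-2$ sum to at most $4\left(2^{n-2} - 2^{(n-2)/2}\right) = 2^n - 2^{n/2+1} < 2^n - 2$, so they cannot reach the top of the length-$n$ range, forcing a length-$n$ summand into some cases; when $n$ is odd, no summand of length $n$ exists at all. It is entirely possible that every small, tightly clustered family of length combinations fails to cover all residual cases even though the conjecture is true (the needed decompositions might require summand lengths that vary with $N$ in a way no finite automaton of this form captures). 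So the difficulty is not only ``space, not correctness of the idea'': the idea itself has an unverified combinatorial hypothesis at its core, and until the length family is found and the machine computation actually terminates with a positive answer, the statement remains exactly as open as the paper leaves it.
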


\begin{conjecture}
Every even natural number except
\begin{multline*} 8,18,28,130,134,138,148,158,176,318,530,538,548,576,644,1300,2170,2202,2212,2228, \\
2230,2248,8706,8938,8948,34970,35082
\end{multline*}
is the sum of at most $3$ antipalindromes.
\end{conjecture}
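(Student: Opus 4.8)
The plan is to follow the same automaton-based template used for Theorem~\ref{nwap} and the base-3 result, adapted with the antipalindrome modification to the machine structure described just above the conjecture. The statement has three logically distinct parts, which I would attack separately. First comes the finite exceptional behavior: I would write a short program to confirm that each of the $27$ listed numbers genuinely fails to be a sum of three or fewer antipalindromes, and --- crucially --- that every even number below a fixed threshold $B$ (chosen above the largest exception $35082$, e.g.\ $B = 2^{17}$, so that the corresponding length threshold $n_0$ satisfies $2^{n_0-1} > 35082$) that is \emph{not} on the list \emph{is} such a sum. This pins down the base cases and certifies that no further exceptions lurk among the small inputs.

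The infinite part reduces, as before, to a language-inclusion check between two automata. The key preliminary step is to identify a finite family of ``length profiles'' --- tuples of even lengths expressed as offsets from the input length $n$ --- with the property that every even $n$-bit integer with $n \geq n_0$ is a sum of at most three antipalindromes whose lengths match one of the profiles. Because antipalindromes have even length, the admissible profiles depend on the parity of $n$: for $n$ even one expects profiles drawn from lengths $n, n-2, n-4, \dots$, and for $n$ odd from lengths $n-1, n-3, \dots$, mirroring the parity split already seen in Conjecture~\ref{antipal}. Experimentally scanning the representations found for $B \leq N < 2B$ should reveal a small sufficient set of profiles. Note that, unlike the generalized-palindrome case, there is no all-zero antipalindrome available for padding, so the ``at most'' clause must be realized by explicitly including profiles with one and two summands alongside the three-summand profiles. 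I would then build the accepting machine as a union, over the chosen profiles, of antipalindrome-sum checkers (guessing the summands in parallel while tracking the two carries in the state), together with a {\tt SyntaxChecker} accepting all even inputs of length at least $n_0$, and assert inclusion of the latter in the former using ULTIMATE.

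The hard part --- and the reason this remains a conjecture rather than a theorem in the present paper --- is the state explosion. As the authors note, their antipalindrome programs already ran out of space, and a three-summand checker with several length profiles united and then determinized is exactly the regime where the $2^{\Theta(n^2)}$ blow-up for NWAs becomes prohibitive. I therefore expect the decisive engineering step to be shrinking the machine before determinization: using the ``folded'' NFA encoding of the base-3 and base-4 proofs in place of NWAs (which avoids the stack and yielded machines of only a few hundred states), minimizing aggressively, and keeping the number of length profiles as small as the data allow. If a sufficient set of profiles with manageable automata can be found, the inclusion check closes the infinite part and, combined with the finite verification, upgrades the conjecture to a theorem; if not, the obstruction is purely computational rather than mathematical.
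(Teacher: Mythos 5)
You should first note what the paper actually does with this statement: it is presented as a \emph{conjecture}, supported only by numerical evidence, and the paper offers no proof of it at all. Indeed, the paper could not even complete the computations for the weaker Conjecture~\ref{antipal} (``our program runs out of space''), which is precisely why everything in the antipalindrome subsection beyond Theorem~\ref{gap2} is left conditional or conjectural. So there is no paper proof to match your proposal against; the honest benchmark is whether your proposal itself constitutes a proof.

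It does not, and you essentially concede this yourself. Your plan is a faithful transcription of the paper's methodology (finite check below a threshold, a family of length profiles, parallel guessing with two carries, a language-inclusion assertion in ULTIMATE, and the sensible suggestion to use the folded NFA encoding of Theorems~\ref{base3} and~\ref{base4} rather than NWAs to tame the state explosion), and your diagnosis of the obstruction --- determinization blow-up --- is exactly the one the authors report. But every decisive step is left unexecuted: no finite family of length profiles is exhibited (and its existence is itself an empirical conjecture --- the parity constraint on antipalindrome lengths makes it far from obvious that three summands drawn from finitely many offsets of $n$ suffice for all large even inputs); no automaton is constructed; no inclusion check is run; and your conclusion is explicitly conditional (``if a sufficient set of profiles\ldots can be found\ldots; if not\ldots''). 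A conditional research plan, however well-aligned with the authors' own intended attack, is not a proof, and the statement remains exactly what the paper says it is: a conjecture. One further caution: your finite verification below $B=2^{17}$ would certify the listed exceptions only up to that bound; the conjecture as stated also asserts that \emph{no} exceptions occur beyond the list, so the automaton side of the argument must cover all lengths $n \geq n_0$ with no residual gap between $35082$ and $2^{n_0-1}$, which your threshold choice handles but which deserves to be said explicitly.
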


\subsection{Generalized antipalindromes}

Similarly, one could consider ``generalized antipalindromes''; these are numbers whose base-$2$ expansion becomes an antipalindrome if a suitable number of leading zeroes are added.  The notion of length here is the same as in Section~\ref{genpal}.

\begin{theorem}

Every natural number of length $n$, for $n \geq 6$ and even, is the sum of exactly $6$ generalized antipalindromes of length $n-2$.

\label{gap2}
\end{theorem}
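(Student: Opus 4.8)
The plan is to prove Theorem~\ref{gap2} exactly as the earlier results were proved: construct a nested-word automaton (NWA) that verifies the desired decomposition, and then invoke the decidable language-inclusion test. Concretely, I would build an NWA \texttt{gapChecker} that reads an $n$-bit input $N$ least-significant-bit first, using the call/internal/return partition of Section~\ref{nwap-proof}, nondeterministically guesses six summands bit by bit, verifies that each guessed summand is a generalized antipalindrome of length $n-2$, and accepts iff the six summands sum to $N$. I would also build a \texttt{syntaxChecker} accepting all well-formed inputs encoding integers of even length $n \geq 6$, and then use the ULTIMATE toolchain to assert that the language of \texttt{syntaxChecker} is contained in that of \texttt{gapChecker}. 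Since $n$ is even, the input word has even length and splits cleanly into a call half (low-order bits) and a return half (high-order bits) with no internal divider symbol, which simplifies the construction relative to the odd-length case treated in Section~\ref{nwap-proof}.

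The guessing and carry mechanism is the same as before. At each call position the machine guesses one bit for each of the six summands, adds them to the incoming carry, checks that the low-order output bit matches the corresponding input bit, updates the carry (which stays bounded, since six summand bits plus a bounded carry keep the running carry at most a fixed constant), and pushes onto the stack the number $y \in \{0,\ldots,6\}$ of ones it guessed at that position. The antipalindrome condition is enforced on the return half exactly as sketched for antipalindromes: when a return symbol pops a pushed count $y$, the number of ones contributed at the matching high-order position is the complement $6-y$, so the state tracking carry $i$ adds $i+(6-y)$, checks the output bit against the input, and moves to the resulting carry. The one new ingredient is the length mismatch: the summands have length $n-2$ while the input has length $n$, an offset of $2$. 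I would compensate for this exactly as the length-$(n-2)$ summand $q$ was handled in Theorem~\ref{nwap}, namely by having each pushed state carry not only its current guessed count but also the count guessed two positions earlier, so that when the stack is popped at the position the nesting pairs it with, the delivered count is in fact the one at the true antipalindromic partner. The two middle summand positions, which both fall in the call half, are paired at the ``turn'' from calls to returns via a constraint on these buffered counts, again as in Theorem~\ref{nwap}; the top two input positions then receive only carry, and acceptance requires that they reproduce the two high-order input bits with final carry $0$.

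One point worth flagging, and that the construction must respect, is that the all-zero word is \emph{not} an antipalindrome (the outermost bit pair would have to sum to $1$), so $0$ is never a generalized antipalindrome. Consequently the ``exactly $6$'' in the statement cannot be weakened to ``at most $6$'' by padding with zero summands, as was possible for generalized palindromes in Section~\ref{genpal}; the machine must force all six guessed summands to be genuine antipalindromes. This is already accomplished by the complementation mechanism at the outermost (first/last) bit pair, with the leading bit allowed to be $0$ in the generalized case.

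I expect the main obstacle to be getting this off-by-two alignment to interact correctly with the antipalindromic complementation and the carry bookkeeping at both ends simultaneously: the buffered counts must be consumed with the right delay, the middle pair must be matched correctly at the turn, and the two pure-carry positions at the top must be reconciled with the bounded carry emerging from the summand range. A secondary but essential difficulty is that the inclusion test must actually \emph{succeed} for these specific parameters (six summands, common length $n-2$); these were presumably located by experiment, and had the inclusion failed one would have had to enlarge the number of summands or adjust the admissible lengths. Once the construction is correct and the inclusion holds, the theorem follows immediately, and I would additionally validate the machine by experimentally tabulating, for several small even $n$, which $n$-bit integers are sums of six generalized antipalindromes of length $n-2$ and checking these against the automaton, as was done for the other results.
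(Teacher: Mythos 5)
Your proposal takes essentially the same route as the paper: an NWA that nondeterministically guesses the six summands, pushes the count of ones guessed at each call position, complements that count (i.e., uses $6-y$) on the matching return to enforce the antipalindrome condition, relaxes the initial states to permit leading zeros, and then certifies the theorem by an automated language-inclusion check in ULTIMATE. This is exactly the paper's proof (which reuses the antipalindrome machinery with those two modifications and reports a determinized automaton of 2254 states), and your additional observations --- the off-by-two alignment, the middle-pair matching at the call/return turn, and the fact that the all-zero word is not a generalized antipalindrome, which is why the statement says ``exactly $6$'' --- are all correct and consistent with it.
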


\begin{proof}
Since generalized antipalindromes can have leading zeroes, we allow all $f$-states with no carry as initial states. We also complement the number of ones for the second half, as we did when handling regular antipalindromes.

The automated proof of (a) can be found at \codelink. The determinized automaton has 2254 states.

\end{proof}

\begin{corollary}
Every natural number is the sum of at most
$7$ generalized antipalindromes.
\label{gap3}
\end{corollary}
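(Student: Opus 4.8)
The plan is to bootstrap everything from Theorem~\ref{gap2}, which already writes every number of even length $n \ge 6$ as a sum of exactly $6$ generalized antipalindromes — one short of the budget of $7$. So the real work splits into two easy pieces and one substantive piece: first clear away small $N$ by a finite computation; second, dispose of even lengths directly via Theorem~\ref{gap2}; and third, reduce a number of \emph{odd} length to the even case using a single additional generalized antipalindrome. First I would check the claim directly for all $N < 128$ (equivalently, all numbers of length at most $7$) with a short program. For $N \ge 128$, let $n$ be the length of $N$, so $n \ge 8$. If $n$ is even, then Theorem~\ref{gap2} expresses $N$ as a sum of $6$ generalized antipalindromes of length $n-2$, and we are done with a summand to spare.

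The interesting case is $n$ odd, so $n \ge 9$ and $n-1 \ge 8$ is even. Here I would subtract a single generalized antipalindrome $P$ chosen so that $M := N - P$ has length exactly $n-1$, i.e.\ $M \in [2^{n-2}, 2^{n-1})$; equivalently, $P$ must lie in the interval $I_N := (N - 2^{n-1},\, N - 2^{n-2}]$. Once $M$ has even length $n-1 \ge 6$, Theorem~\ref{gap2} writes $M$ as a sum of $6$ generalized antipalindromes, and adjoining $P$ gives $N$ as a sum of $7$, as required. Thus the whole corollary reduces to one existence statement, which I expect to be the main obstacle: for every $N$ of odd length $n \ge 9$, the interval $I_N$ contains a generalized antipalindrome.

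The key idea that makes the single-summand budget achievable — and the naive worry that must be overcome — is the choice of which generalized antipalindromes to use. Restricting $P$ to lengths $\le n-1$ forces $P < 2^{n-1}$, and one checks that this fails for $N$ very close to $2^n$, since then $N - P \ge 2^{n-1}$ can never drop the length. The fix is to draw $P$ instead from the generalized antipalindromes of length $n+1$ (even, since $n$ is odd): allowing a leading $0$, these populate the whole range $[0, 2^n)$. A length-$(n+1)$ antipalindrome is determined by its high half of $m = (n+1)/2$ bits, the low half being the forced reverse complement; as the high half runs through its values the resulting integers increase in steps of $2^m$ perturbed by a low-half change of magnitude $< 2^m$, so consecutive generalized antipalindromes of this length differ by less than $2^{m+1} = 2^{(n+3)/2}$. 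Since $I_N$ has length $2^{n-2}$, and $2^{n-2} \ge 2^{(n+3)/2}$ holds precisely when $n \ge 7$ (a fortiori for our $n \ge 9$), while $I_N \subseteq [0, 2^n)$ stays inside the range covered by these antipalindromes, the interval is guaranteed to contain at least one of them.

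The hard part is making this density/gap estimate fully rigorous at the two ends of the range: at the bottom, where $N$ is near $2^{n-1}$ and $I_N$ abuts $0$, and at the top, where the upper endpoint $N - 2^{n-2}$ approaches the largest available length-$(n+1)$ antipalindrome. For the bottom I would note that the smallest length-$(n+1)$ generalized antipalindrome is $2^{(n+1)/2} - 1 \le 2^{n-2}$, so it already lies in $I_N$ whenever the lower endpoint drops below it; for the top I would check that $N - 2^{n-2} < 3\cdot 2^{n-2}$ stays safely below the maximal such antipalindrome $\approx 2^n - 2^{(n+1)/2}$. The gap bound leaves ample slack for $n \ge 9$, so no genuine difficulty is expected at the boundary, and the finitely many small odd lengths are already absorbed by the $N < 128$ computation.
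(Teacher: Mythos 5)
Your proposal is correct, and its skeleton --- a finite check for small $N$, Theorem~\ref{gap2} applied directly when the length is even, and one extra generalized antipalindrome to knock an odd-length number down to even length --- is exactly the strategy the paper intends. But the paper's entire proof is the single sentence ``Just like the proof of Corollary~\ref{main}, using Theorem~\ref{gap2},'' and the analogy is imperfect in precisely the place you noticed: in Corollary~\ref{main} the reduction subtracts the \emph{fixed} palindrome $1$, which preserves the length and flips the parity of the number, whereas here one must flip the parity of the \emph{length}, and no fixed generalized antipalindrome does that for all $N$. Indeed, any generalized antipalindrome of length at most $n-1$ is at most $2^{n-1}-2^{(n-1)/2}$, which is too small to serve when $N$ is close to $2^n$, so the bridging summand must be chosen adaptively from the length-$(n+1)$ antipalindromes, exactly as you do. Your density lemma --- that generalized antipalindromes of length $n+1$ are determined by their top $(n+1)/2$ bits, hence occur with consecutive gaps less than $2^{(n+3)/2} \le 2^{n-2}$ for $n \ge 7$, so the interval $(N-2^{n-1},\, N-2^{n-2}]$ always contains one --- is the substantive step that makes the reduction rigorous, and it appears nowhere in the paper. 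Two cosmetic slips, neither damaging: the largest generalized antipalindrome of length $n+1$ is $2^{n+1}-2^{(n+1)/2}$, not $\approx 2^{n}-2^{(n+1)/2}$ (the true value only gives you more room), and the claim that these antipalindromes ``populate the whole range $[0,2^n)$'' is literally false at the bottom (the smallest is $2^{(n+1)/2}-1$), but your endpoint analysis in the last paragraph repairs exactly this. In short: same strategy as the paper, with the key existence argument, which the paper glosses over entirely, supplied in full.
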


\begin{proof}
Just like the proof of Corollary~\ref{main},
using Theorem~\ref{gap2}.
\end{proof}

\begin{remark}
Corollary~\ref{gap3} is probably not best possible.  The correct bound seems to be $3$.
The first few numbers that do not have a representation as the sum of $2$ generalized antipalindromes are
\begin{multline*}
29,60,91,109,111,121,122,131,135,272,329,347,365,371,373,391,401,429,441,445,449,469, \\
473,509,531,539,546,577,611,660,696,731,744,791,804,884,905,940,985,1011,1020,1045, \ldots
\end{multline*}
\end{remark}

\section{Objections to this kind of proof}
\label{objections}

A proof based on computer calculations, like the one we have presented here, is occasionally criticized because it cannot easily be verified by hand, and because it relies on software that has not been formally proved. These kinds of criticisms are not new; they date at least to the 1970's, in response to the celebrated proof of the four-color theorem by Appel and Haken \cite{Appel&Haken:1977,Appel&Haken&Koch:1977}.  See, for example, Tymoczko \cite{Tymoczko:1979}.

We answer this criticism in several ways.  First, it is not reasonable to expect that every result of interest to mathematicians will have short and simple proofs.  There may well be, for example, easily-stated results for which the shortest proof possible in a given axiom system is longer than any human mathematician could verify in their lifetime, even if every waking hour were devoted to checking it.  For these kinds of results, an automated checker may be our only hope.  There are many results for which the only proof currently known is computational.

Second, while short proofs can easily be checked by hand, what guarantee is there that any very long case-based proof --- whether constructed by humans or computers --- can always be certified by human checkers with a high degree of confidence?   There is always the potential that some case has been overlooked.  Indeed, the original proof by Appel and Haken apparently overlooked some cases.  Similarly, the original proof by Cilleruelo \& Luca on sums of palindromes \cite{Cilleruelo&Luca:2016} had some minor flaws that became apparent once their method was implemented as a {\tt python} program.

Third, confidence in the correctness of the results can be improved by providing code that others may check.  Transparency is essential.  To this end, we have provided our code for the nested-word automata, and the reader can easily run this code on the software we referenced.

\section{Future work}
\label{future}

This is a preliminary report.  In a later version of the paper, we hope to 
improve our bounds for generalized palindromes and antipalindromes. 

\subsection{Schnirelmann density}

Another, more number-theoretic, approach to the problems we discuss in this paper is Schnirelmann density.  Given a set $S \subseteq \Enn$, define
$A_S (x) = \sum_{{i \in S}\atop {1 \leq i \leq x}} 1$ to be the counting
function associated with $S$.  The Schnirelmann density of $S$
is then defined to be
$$ \sigma(S) := \inf_{n \geq 1}  {{A_s(n)}\over n} .$$

Classical theorems of additive number theory (e.g., 
\cite[\S 7.4]{Nathanson:1996}) relate the property of being an additive basis to the value of $\sigma(S)$.  We pose the following open problems:

\begin{openproblem}
What is the Schnirelmann density $d_k$ of those numbers expressible as the sum of at most $k$ binary palindromes?  By computation we find $d_2 <  0.443503$
and $d_3 < .942523$.
\end{openproblem}

\begin{openproblem}
Let $A$ be a $k$-automatic set of natural numbers, or its analogue using a pushdown automaton or nested-word automaton.  Is $\sigma(A)$ computable?
\end{openproblem}

\section{Moral of the story}
\label{moral}

We conclude with the following thesis, expressed as two principles.

1.  If an argument is heavily case-based, consider turning the proof into an algorithm.

2.  If an argument is heavily case-based, seek a logical system or machine model where the assertions can be expressed, and prove them purely mechanically using a decision procedure.

\medskip

Can other new results in number theory or combinatorics be proved using our approach?  We leave this as another challenge to the reader.

\section*{Acknowledgments}

We thank Dirk Nowotka and Jean-Paul Allouche for helpful discussions.  We thank the creators of the ULTIMATE automaton library for their assistance.

\end{document}